\definecolor{lightgray}{rgb}{0.9,0.9,0.9}
\begin{document}

\title{Blockchain-enabled Data Governance for Privacy-Preserved Sharing of Confidential Data}
\author{Jingchi Zhang,~\IEEEmembership{NTU Singapore}, Anwitaman Datta,~\IEEEmembership{NTU Singapore}

\thanks{This work was supported by Ministry of Education (MOE) Singapore's Tier 2 Grant Award Number MOE-T2EP20120-0003 and Ministry of Education (MOE) Singapore's Tier 1 Research Integrity Grant 2020 Award Number 020583-00001.}}



\maketitle
\begin{abstract}
In a traditional cloud storage system, users benefit from the convenience it provides but also take the risk of certain security and privacy issues. To ensure confidentiality while maintaining data sharing capabilities, the Ciphertext-Policy Attribute-based Encryption (CP-ABE) scheme can be used to achieve fine-grained access control in cloud services. However, existing approaches are impaired by three critical concerns: illegal authorization, key disclosure, and privacy leakage. 

To address these, we propose a blockchain-based data governance system that employs blockchain technology and attribute-based encryption to prevent privacy leakage and credential misuse. First, our ABE encryption system can handle multi-authority use cases while protecting identity privacy and hiding access policy, which also protects data sharing against corrupt authorities. Second, applying the Advanced Encryption Standard (AES) for data encryption makes the whole system efficient and responsive to real-world conditions. Furthermore, the encrypted data is stored in a decentralized storage system such as IPFS, which does not rely on any centralized service provider and is, therefore, resilient against single-point failures. Third, illegal authorization activity can be readily identified through the logged on-chain data. Besides the system design, we also provide security proofs to demonstrate the robustness of the proposed system. 

\end{abstract}

\begin{IEEEkeywords}
Attribute-based Encryption, Blockchain, Data Sharing, Governance, Multi-authority, Privacy Enhancing Technology
\end{IEEEkeywords}


\section{Introduction}\label{sec:intro}
Notwithstanding the many advantages of cloud computing, which have led to its widespread adoption and continuous growth, it also presents certain risks that prompt the exploration of alternative architectures. In particular, due to the inherent centralization of cloud services, they can become a single point of failure. This presents issues regarding service availability, censorship, and end-user privacy concerns. These challenges are further exacerbated by potential insider attacks and the service provider's own agency. For instance, Apple's decision in 2021 to roll out a Child Sexual Abuse Material detection technology by scanning images stored in its iCloud service led to numerous collateral privacy concerns and criticisms\cite{VergeArticle}. Even though the original plan was eventually abandoned mainly because of strong public backlash, the fundamental vulnerability of such centralized systems being subject to privacy violations or censorship remains.    

To address some of these issues inherent in centralized cloud storage, many encryption schemes such as AES, RSA, Proxy Re-encryption (PRE), Identity-based Encryption (IBE), and Attribute-based Encryption (ABE) have been used to secure data confidentiality \cite{sudha2012enhanced,alowolodu2013elliptic,yan2009strengthen,yang2020auth}. However, some encryption schemes may not be applicable for data sharing, which is a common use case. 

As an example, consider private electronic health records (EHR) that can be accessed by a patient (data owner, \textit{DO}) and all the hospitals (data users, \textit{DU}s) that are involved in the patient's treatments. When the patient wants to claim the cost of medical treatment, the insurance company, which is a new \textit{DU}, needs to obtain the patient's EHR to make a judgment. In a traditional public-key encryption system like RSA, the patient must re-encrypt the EHR using the public key of the insurance company. Similarly, in the IBE scheme, the patient must generate a new identity-based ciphertext for the insurance company. These access control options are limited in terms of flexibility and scalability, as the \textit{DO} may have to take additional steps on demand to ensure that the encrypted content is accessible to a new \textit{DU}. In Ciphertext-Policy Attribute-based Encryption (CP-ABE) schemes, each user is associated with a set of attributes. Successful decryption can be carried out only if a user's attribute set satisfies the access policy embedded in the ciphertext. As a result, once the message has been encrypted using the CP-ABE, it becomes available to all authorized existing and future users. Therefore, CP-ABE is a promising solution for data confidentiality and data sharing. 

Traditional CP-ABE schemes may rely on intermediary entities like a Trusted Third Party (TTP) and a Central Authority (CA) for the security and trustworthiness of data access control. For normal cloud users with limited computation resources, the \textit{DO} usually outsources the ciphertext to the TTP under a specified policy, and the \textit{DU} may also delegate the decryption task to the TTP in  \cite{green2011outsourcing,zhang2020attribute}. It also requires the CA to verify attributes across different organizations and issue private keys to every user in the data-sharing system. Therefore, when analyzing the adversarial model of these schemes, researchers tend to make the assumption that TTP or CA are fully trusted \cite{li2012scalable, zhang2020attribute}. Such a system is vulnerable to access credentials misuse that a malicious CA may purposefully issue attribute keys to some unauthorized users. It is thus crucial to design decentralized alternatives for trust mechanisms and enforce traceability throughout the access control system. 

Blockchain-based access control system with CP-ABE has the potential to be an effective solution to these problems. In the absence of a TTP to control the data, each node would share a distributed ledger that keeps track of a growing list of transactions that have been verified and confirmed by consensus mechanisms before being recorded. The integrity of transactions can be secured by hashing, Merkle trees, time stamping, and incentive mechanisms. Based on these premises, several blockchain-based access control systems have been proposed since the emergence of public blockchain systems\cite{nakamoto2008bitcoin} and the advent of Attributed-based Encryption\cite{sahai2005fuzzy}. Some efforts leverage the immutable public ledger to build a transaction-based access control system for secure data sharing\cite{maesa2017blockchain, ouaddah2016fairaccess,wang2018blockchain}, while others leverage the self-executing smart contracts to establish a smart contract-based access control system for flexibility and traceability\cite{qin2021blockchain, HEI2021Making}. 

However, just employing blockchain technology and CP-ABE encryption for an access control system is inadequate for various real-world applications. On the one hand, information is not always shared inside a single domain or organization. For example, driver's licenses and university registration information may be managed by separate entities. The same attribute authority cannot be responsible for both attribute management and key distribution. Therefore, multi-authority Attribute-based Encryption \cite{chase2007multi}, first proposed by Chase in 2007, is used to solve the access problem involving attributes belonging to various authorities. This scheme permits any number of independent authorities to distribute secret keys, which are later selected by the data owner for encrypting a message. 


Another concern is the privacy issue, which encompasses both policy-hiding and receiver privacy. Since in the classic CP-ABE schemes, an access structure specified in terms of user attributes is explicitly transmitted alongside ciphertext, whoever accesses the ciphertext is also aware of the corresponding access policy. Therefore, multi-authority CP-ABE schemes \cite{chase2007multi, lewko2011decentralizing, rouselakis2015efficient} are also unsuitable for certain use cases since access policies contain sensitive information. This calls for mechanisms to hide access policies for CP-ABE systems. Additionally, \textit{DU} needs to provide a full set of user attributes to each authority for an attribute key, inevitably compromising the key receiver's privacy. 

In pursuit of addressing these concerns, several multi-authority CP-ABE schemes that feature policy-hiding have been proposed \cite{yang2018improving, michalevsky2018decentralized, zhang2021decentralizing}. Despite these efforts, they do not completely fulfill various practical requirements. Schemes such as \cite{yang2018improving,zhang2021decentralizing} are prone to the leakage of \textit{DU}'s confidential attribute information during the key generation or encryption process. Michalevsky and Joye introduced a fully policy-hiding decentralized CP-ABE scheme \cite{michalevsky2018decentralized}, which protects attribute information attached to the access policy and even addresses the issue of receiver privacy. However, we identified a vulnerability in this scheme as it is susceptible to \textbf{rogue-key attack}. In this type of attack, a malicious \textit{AA} can register an aggregate public key using public information from other honest \textit{AA}s. This key can be used to decrypt the ciphertext despite lacking the necessary attribute keys to satisfy the policy. The attack mechanism on \cite{michalevsky2018decentralized} is elaborated in Section \ref{rk-attack}. 


As a result, the challenge of securely storing user data, enabling efficient data sharing, and managing multi-authority scenarios, while concurrently maintaining a balance of privacy, transparency and traceability constitutes a complex problem that requires innovative solutions.

\subsection{Contributions}

In this paper, we propose a multi-party CP-ABE-based storage outsourcing system that uses blockchain technology to limit access credential misuse and privacy leakage. To the best of our knowledge, this is the first practical solution for storage outsourcing to achieve fine-grained access control with user anonymity and flexibility. 

The core contributions of this work are as follows:

\begin{itemize}
    \item We discern that the decentralized ABE scheme with policy-hiding presented in \cite{michalevsky2018decentralized} inherently possesses security vulnerabilities. It is susceptible to a rogue-key attack: A malicious \textit{AA} can decrypt the ciphertext without the knowledge of the attribute keys required to satisfy the policy. It also encounters a potential risk where an adversary might infer some sensitive information from the published ciphertext, a result of poorly chosen public parameters. These vulnerabilities are thoroughly analyzed in Section \ref{rk-attack} and Section \ref{isv-risk}, respectively.

    \item To counteract the rogue-key attack and alleviate some potential risks, we modify the algorithms of \textbf{Setup} and \textbf{Auth Setup} in \cite{michalevsky2018decentralized} as described in Definition \ref{def:IPPE}. Firstly, we introduce a multi-party protocol inspired by \cite{bowe2018multi} for public key generation, which is detailed in \textbf{Trusted Setup} of Section \ref{sub_sec:trusted_setup}. Secondly, we impose a prerequisite for each \textit{AA} to prove the knowledge of published information during the process of \textbf{Auth Setup}. This is elaborated in Section \ref{sub_sec:auth_setup}. We further demonstrate that our enhanced system successfully mitigates the aforementioned security concerns, as outlined in Section \ref{rk-pos} and Section \ref{isv-pos}.

    \item In order to further decentralize our proposed system, we integrate blockchain technologies such as smart contracts and content addressing, alongside multi-authority attribute-based encryption. An overview of the system architecture is presented in Section \ref{sec:overview} and pseudo code of system contracts can be found in Section \ref{sec:app}. This hybrid approach enhances the practicality and security of the system, which makes it resilient against single-point failures and misuse of credentials. Given that transparency and traceability are inherent attributes of blockchain, a blockchain-enabled ABE system actually realizes a balanced solution for data sharing while simultaneously preserving privacy.

    \item Overall, we propose a secure, privacy-preserving data governance system based on blockchain technology and an improved decentralized CP-ABE scheme with policy-hiding. Using a combination of Attribute-based Encryption (ABE) and the Advanced Encryption Standard (AES) makes the system practical. The special ABE encryption scheme is capable of handling multi-authority use cases while protecting identity privacy and ABE's policy. The adoption of AES helps assure the confidentiality of user data, which is furthermore stored in a decentralized storage system, specifically Inter Planetary File System (IPFS), which does not rely on a central service provider and hence does not become a single point failure.

\end{itemize}

\begin{table*}[t]
\centering
\caption{Summary of Access Control System using Attribute-based Encryption (Part I)}
\label{tab:comparison_part1}
\resizebox{\textwidth}{!}{%
\begin{tabular}{l|l|l|l|l|l|l|l}
\hline
\textbf{Approach} &
  \textbf{Authority} &
  \textbf{Policy} &
  \textbf{Universe} &
  \textbf{Policy-hiding} &
  \textbf{Receiver-hiding} &
  \textbf{Access Control} &
  \textbf{Storage} \\ \hline
\cite{wang2018blockchain}   & Single   & AND  & Small & No        & No  & Smart Contract & IPFS \\ \hline
\cite{qin2021blockchain}    & Multiple & LSSS & Small & No        & No  & CSP            & CSP  \\ \hline
\cite{yang2018improving}    & Multiple & LSSS & Small & No        & Yes & CSP            & CSP  \\ \hline
\cite{nishide2008attribute} & Single   & AND  & Small & Partially & No  & CSP            & CSP  \\ \hline
\cite{gao2020trustaccess}   & Single   & AND  & Large & Fully     & No  & Smart Contract & CSP  \\ \hline
\cite{cui2018efficient}     & Single   & LSSS & Large & Partially & No  & CSP            & CSP  \\ \hline
\cite{zhang2018Security}    & Single   & LSSS & Large & Partially & No  & CSP            & CSP  \\ \hline
\cite{li2022attribute}      & Multiple & LSSS & Large & Fully     & No  & CSP            & CSP  \\ \hline
\cite{zhao2022Secure}       & Multiple & LSSS & Large & Fully     & No  & CSP            & CSP  \\ \hline
\rowcolor{lightgray} This work                        & Multiple & AND  & Small & Fully     & Yes & Smart Contract & IPFS \\ \hline
\end{tabular}%
}
\end{table*}

In Table \ref{tab:comparison_part1}, we compare and position our proposed Blockchain-enabled data governance system with existing works \cite{wang2018blockchain, qin2021blockchain, nishide2008attribute, li2022attribute, gao2020trustaccess, cui2018efficient,zhang2018Security, yang2018improving, zhao2022Secure} that are closely related to ours with regard to flexibility, scalability, privacy, and decentralization across the following assessment criteria:

\begin{enumerate}
    \item Attribute Authority: Whether the authorities involved in CP-ABE schemes are divided into single thus central authority or multi-authority. 
    
    \item Policy: Linear Secret Sharing Scheme (LSSS) which supports \textbf{AND} gate, \textbf{OR} gate, and threshold gate versus only  \textbf{AND.}
    
    \item Attribute Universe: We define the complete set of supported attributes as attribute universe and only take into account two types of universe: large universe and small universe. In large universe ABE, the attribute universe size has no effect on the size of the system's public key.  
    
    \item Privacy: There are two aspects of privacy involved in CP-ABE schemes: policy-hiding and receiver-hiding. For the policy-hiding scheme, the CP-ABE system is available in two forms: fully hidden and partially hidden. The former means that none of the attributes can be revealed from the access policies, whereas the latter refers to only hiding sensitive attribute values in the access policies. For the receiver-hiding scheme, it prevents any \textit{AA}s from learning the full set of attributes the receiver (i.e., the \textit{DU}) possesses, hence relieving the \textit{DU} from disclosing them while requesting attribute keys.
    
    \item Storage: From a technical perspective, traditional cloud service provider (CSP) and decentralized storage systems such as IPFS, Storj, and Sia, are two distinct popular solutions for data storage and sharing. CSPs may take advantage of their comprehensive control over data, but end users are exposed to the risks of a single point of failure, privacy violation, and censorship. 
    
    \item Access Control: We indicate whether access control enforcement is through a smart contract and thus logically decentralized or by a cloud service provider and thus logically centralized. 
\end{enumerate}

From Table \ref{tab:comparison_part1}, we observe that very few schemes \cite{li2022attribute, yang2018improving, zhao2022Secure} achieve fine-grained access control and support multi-authority with privacy-preserved characteristics, such as policy-hiding and receiver privacy. However, they all rely on a trusted third party (TTP) or cloud service provider (CSP) to offer centralized storage and access control management and are thus susceptible to the inherent vulnerabilities of such centralized systems in terms of privacy issues. In contrast, our proposed scheme integrates an IPFS network for decentralized storage and a smart contract for access control management in order to decentralize.

\subsection{Organization}
The rest of the paper is structured as follows: Section \ref{sec:lit} contains related work that reviews traditional Attribute-based Encryption schemes and conducts an analysis of some recent solutions for access control systems with ABE technology, while Section \ref{sec:pre} summarizes the preliminaries that the techniques developed in this paper build upon. The proposed system model and protocols are discussed in depth in Section \ref{sec:overview} and Section \ref{sec:design}. Section \ref{sec:analysis}  contains systematic security analysis. Finally, our conclusions and future plans are presented in Section \ref{sec:conclusion}.

\section{Related Work}\label{sec:lit}
Attribute-based encryption (ABE) was first introduced by Sahai and Waters in 2005 \cite{sahai2005fuzzy}. Subsequently, numerous proposals for single-authority ABE system \cite{goyal2006attribute, bethencourt2007ciphertext} have been put forth. In these systems, the data owner (\textit{DO}) encrypts data and employs a boolean formula over a set of attributes to restrict access. If the data user (\textit{DU}) possesses the secret keys, issued by a central authority (\textit{CA}) that satisfy the boolean formula attached to the ciphertext, \textit{DU} can retrieve the original data. However, these single-authority ABE systems \cite{sahai2005fuzzy, goyal2006attribute, bethencourt2007ciphertext} encounter constraints such as performance bottlenecks and key escrow issues. 

Therefore, Zhang et al proposed an enhanced CP-ABE scheme \cite{zhang2014computationally}, which alleviates the performance bottleneck issue by reducing the computation cost and ciphertext length. This work has been further explored in \cite{wang2018blockchain} to create a framework that integrates decentralized storage, smart contract, and CP-ABE techniques to achieve fine-grained access control. 

Another concern with the single-authority ABE system is key escrow, where CA issues all the attribute secret keys, thereby gaining the ability to decrypt each ciphertext generated by data owners. To address this issue, Chase and Chow introduced Multi-authority Ciphertext-policy ABE (MA-CP-ABE) \cite{chase2007multi} without the need for CA. Lewko and Waters further developed this multi-authority scheme in their work \cite{lewko2011decentralizing} allowing any authority to join or leave the system independently. Based on this work, Qin et al designed a blockchain-based multi-authority access control scheme to address performance and single-point failure issues \cite{qin2021blockchain}.

In an effort to extend the usability of CP-ABE schemes, Nishide et al. presented a desirable property, hidden access policy, in \cite{nishide2008attribute}. This approach protects sensitive attribute values while leaving attribute names public, denoted as partially-hiding. Since then, multiple enhanced schemes \cite{lai2011fully, gao2020trustaccess,cui2018efficient, zhang2018Security, li2022attribute} have been proposed. To support a wide variety of access structures, a fully secure policy-hiding CP-ABE was proposed in \cite{lai2011fully}. Gao et al used the optimized scheme of \cite{lai2011fully} to build a blockchain-based access control system in \cite{gao2020trustaccess} that achieves trustworthy access while maintaining the privacy of policy and attributes. To improve the expressiveness of the access policy, a partially hidden CP-ABE scheme under the Linear Secret Sharing Scheme (LSSS) policy was proposed in \cite{cui2018efficient}. Zhang et al proposed a privacy-aware access control system in \cite{zhang2018Security}, denoted as PASH, which supports a large universe CP-ABE scheme with partially hidden CP-ABE. We note that there is another, stricter form of policy-hiding CP-ABE, fully hiding CP-ABE, with which no information of the attributes is revealed with the access policies. Currently, fully hiding CP-ABE can only be indirectly achieved by inner product encryption (IPE) or by using threshold policies \cite{zhang2020attribute}. There are several similar approaches providing policy-hiding as well as ensuring accountability for key abuse, for example, Li's work \cite{li2022attribute} based on large universe Attribute-Based Encryption construction \cite{rouselakis2013practical} and Wu's scheme \cite{wu2019efficient} based on attribute bloom filter (ABF) \cite{dong2013private}.

Nevertheless, most of the aforementioned schemes either neglect the attribute of policy-hiding or exist as single-authority ABE systems. This gap is addressed by multi-authority CP-ABE schemes with a hidden access policy \cite{zhong2016multi,belguith2018phoabe,michalevsky2018decentralized,zhao2022Secure}. The multi-authority CP-ABE scheme featuring policy-hiding was initially introduced by Zhong et al in \cite{zhong2016multi}, and subsequently improved by Belguith in \cite{belguith2018phoabe} that significantly diminishes computational cost by delegating the decryption work to a semi-trusted cloud server. In 2018, Michalevsky and Joye \cite{michalevsky2018decentralized} put forward the first practical decentralized CP-ABE scheme with the policy-hiding property. This scheme provides a security proof in the random oracle model and supports various types of access policies, including conjunctions, disjunctions, and threshold policies. Furthermore, Michalevsky and Joye addressed the issue of receiver privacy through the use of vector commitment. However, it has its limitations, including its support for only fixed-size attributes and authorities, and the requirement for coordination among authorities during the setup phase. Most critically, we demonstrate that it is vulnerable to a rogue-key attack, where a compromised authority may decrypt the ciphertext even in the absence of the requisite attribute keys. 

In addition to the above, there are a few other proposals \cite{yang2018improving, zhao2022Secure} in this area that, unfortunately, give rise to additional issues. For instance, a system developed by Yang et al. in \cite{yang2018improving} keeps the user's identity private from the attribute authority (\textit{AA}) if they are not in the same domain. Yet, this approach creates a new privacy issue that users might request \textit{AA}s within the domain to ask secret attribute keys from other \textit{AA}s on their behalf, implying that an \textit{AA} could potentially possess a complete set of a \textit{DU}'s secret keys. Zhao et al. presented a data sharing scheme \cite{zhao2022Secure} that adopts the access policy of linear secret sharing scheme (LSSS) and supports multi-authority CP-ABE scheme with policy-hiding to achieve the privacy-preserving functionality. However, this system is vulnerable to user key abuse due to its dependence on a single central authority for key generation.

\section{Preliminary}\label{sec:pre}
\newtheorem{assumption}{Assumption}
\newtheorem{myclaim}{\textbf{Claim}}
\newtheorem{definition}{Definition}

To initiate, we revisit certain foundational principles employed within our system. A summary of crucial notations utilized throughout the manuscript is provided in Table \ref{tab:Notation}.

\begin{table}[htb]
\caption{Notation Description}
\label{tab:Notation}
\begin{tabularx}{\columnwidth}{lX}
\hline
\textbf{Notation} & \textbf{Description}                                          \\ \hline

$\mathbb{G}_1, \mathbb{G}_2$ & Two additive cyclic groups of prime order $p$      \\
$\mathbb{G}_T$    & A multiplicative cyclic group of the same order $p$. \\
$\mathbb{Z}_p$    & A set of integers modulo $p$ \\
$\lambda$         & A security parameter that measures the input size of the system \\
$p$               & A prime number used for groups $\mathbb{G}_1, \mathbb{G}_2, \mathbb{G}_T$ and $\mathbb{Z}$\\
$i,j$             & Two indexes used to represent $i$th ($j$th) element or position in a sequence or set\\
$k$               & The parameter associated with the \textit{k-lin} assumption, representing the linear independence of group elements. \\

$PP_{ABE}$        & Public parameters for the use of Attribute-Based Encryption   \\
$PP_{VC}$         & Public parameters for the use of Vector Commitment            \\

$\alpha$          & A scalar used for generation of $PP_{ABE}$ and $PP_{VC}$      \\

$e$               & A set of secret elements used for \textbf{Trusted Setup}      \\
$e'$              & A set of secret elements used for \textbf{Authority Setup}    \\

$h$               & A hash of committed elements in \textbf{Trusted Setup}         \\
$h'$              & A hash of committed elements in \textbf{Authority Setup}       \\

$\pi$             & A proof of knowledge for an element                           \\
$\bm{rp}_s$       & A \textit{s-pair} of the element $s$ in group $\mathbb{G}_1$. The superscript $2$ of $\bm{rp}_s^2$ represent \textit{s-pair} elements in group $\mathbb{G}_2$ \\
$L$               & A list of \textit{s-pair} consisting of all the committed group elements \\

$PK$              & A public key owned by AA, which is used for ABE encryption    \\
$SK$              & A secret key owned by AA, which is used for ABE encryption    \\
$\bm{X}$          & A secret matrix element in $SK$                               \\
$\bm{\tau}$       & A secret vector element in $SK$                               \\
$\sigma$          & A secret value in $SK$                                        \\

$\bm{A},\bm{U}$   & The secret exponents used in $PP_{ABE}$                       \\
$K$               & A component of the attribute key generated by the attribute authority corresponds to an individual attribute.             \\
$sk$              & The consolidated secret key issued by an attribute authority. Given that an attribute authority can oversee multiple attributes, $sk$ might comprise several $K$ components, each representing a distinct attribute.     \\

$\bm{x}$          & A policy vector                                               \\
$\bm{v}$          & An attribute vector  \\

$\bm{C}$          & A Vector Commitment associated with a specific Data User, derived from its attribute vector and global identifier                                   \\
$m$               & A special message used in Vector Commitment                   \\
$op$              & An opening proof to reveal the Vector Commitment              \\
$o_i,o_{i,j}$     & The elements in $PP_{VC}$ where $i,j\in[n], i \neq j$         \\
$z$               & A secret exponent of group element $o$                        \\
$\bm{aux}$        & A collection of message $m$                                   \\


$\mathcal{U}$     & A set of attribute authorities in the universe                \\
$n$               & The number of attribute authorities includes $\textit{AA}_{\text{trust}}$ in the system             \\
$\mathcal{X}$     & A set of attributes in the universe                           \\
$l$               & The number of supported attributes                            \\
$\mathcal{S}$     & A set of attributes possessed by each attribute authority     \\
$\mathcal{R}$     & A set of attributes possessed by data user                    \\


$F$               & A file shared by data owner                                   \\ 
$AK$              & An AES key used to encrypt the file $F$                       \\
$CT_F$            & An encrypted sharing file $F$ encrypted by AES system         \\
$M$               & A metadata consisting related information of $CT_F$           \\
$kw$              & A keyword used in metadata to ease the data retrieval process \\
$CT_M$            & An encrypted metadata $M$ encrypted by ABE system             \\

$loc$             & A location address in IPFS for a file $F$                     \\
$addr$            & A blockchain address                                          \\
$GID$             & A Data User's Global identifier                               \\

$BPK$             & A public key registered in a blockchain                       \\
$BSK$             & A private key registered in a blockchain                      \\
IPFS              & The InterPlanetary File System                                \\\hline
\end{tabularx}
\end{table}

\subsection{Bilinear Mapping}

Consider $\mathcal{G}$ as an algorithm that accepts a security parameter $\lambda$ and constructs three multiplicative cyclic groups of prime order $p$: $\mathbb{G}_1 = \langle g_1 \rangle, \mathbb{G}_2 = \langle g_2 \rangle$, and $\mathbb{G}_T$. We introduce $\hat{e}$ as a bilinear map, with $\hat{e}: \mathbb{G}_1 \times \mathbb{G}_2 \rightarrow \mathbb{G}_T$. The bilinear map $\hat{e}$ has the following characteristics:

\begin{enumerate}
    \item \textbf{Bilinearity}: for all $a,b \in \mathbb{Z}$, $\hat{e}(g_1^a, g_2^b) = \hat{e}(g_1, g_2)^{ab}$.
    \item \textbf{Non-degeneray}: $\hat{e}(g_1, g_2) \neq 1$.
    \item \textbf{Computability}: for all $a,b \in \mathbb{Z}$, $\hat{e}(g_1^a, g_2^b)$ can be efficiently computed.
\end{enumerate}

\subsection{Auxiliary methods and definitions}

We make an assumption of possessing an algorithm, denoted as \textsc{COMMIT}, which takes strings of arbitrary length as input and produces outputs as determined by a random oracle. While this assumption aids our security analysis, in practical implementations, we could use the \textbf{BLAKE-2} hash function in place of \textsc{COMMIT}.


\begin{algorithm}
\caption{Commit of string $h$}\label{alg:commit}
\begin{algorithmic}[1]

\Require $h$ is a string

\Function{COMMIT}{$h$}
    \State $output \gets$ maps $h$ to random integer in $\mathbb{Z}_p^*$
    \State \Return $output$

\EndFunction
\end{algorithmic}
\end{algorithm}

For the inputs $h$ that can not be mapped directly to integers, especially in the case of group elements, we represent them using byte-strings. 

Additionally, we introduce several auxiliary methods to facilitate the verification procedure for certain special properties.

The following definitions and claims are first proposed in the work \cite{bowe2018multi}

\begin{definition} \label{def:sratio}
    Given a bilinear mapping $\hat{e}: \mathbb{G}_1 \setminus \{0\} \times \mathbb{G}_2 \setminus \{0\} \rightarrow \mathbb{G}_T$, elements $A,B \in \mathbb{G}_1 \backslash \{0\}$ and $C,D \in \mathbb{G}_2 \backslash \{0\}$. If $\hat{e}(A,D) = \hat{e}(B,C)$, we may use the term $\mathbf{SameRatio}((A,B),(C,D))$ to represent this relation. 
\end{definition}

\begin{algorithm}
\caption{Determin if two pairs $(A,B)$ and $(C,D)$ have certain relationship}\label{alg:SameRatio}
\begin{algorithmic}[1]

\Require $A,B \in \mathbb{G}_1 \backslash \{0\}$ and $C,D \in \mathbb{G}_2 \backslash \{0\}$

\Function{SameRatio}{$(A,B),(C,D)$}
    \If{$\hat{e}(A,D) = \hat{e}(B,C)$}
        \State \Return true
    \Else
        \State \Return false
    \EndIf
\EndFunction
\end{algorithmic}
\end{algorithm}

\begin{definition}  \label{def:spair}
Given a bilinear mapping $\hat{e}: \mathbb{G}_1 \setminus \{0\} \times \mathbb{G}_2 \setminus \{0\} \rightarrow \mathbb{G}_T$, $s\in \mathbb{Z}_p^*$ and cyclic group of order $p$, an \textit{s-pair} is a pair$(A,B)$ such that $A,B \in \mathbb{G}_1$, or $A,B \in \mathbb{G}_2$; and $s \cdot A = B$. For such an \textit{s-pair} $(A,B)$ in $\mathbb{G}_1$ or $\mathbb{G}_2$, we may represent it using the notation $\bm{rp}_s$ or $\bm{rp}_s^2$ respectively.
\end{definition}

\begin{myclaim}
\textbf{SameRatio} $((A,B),(C,D)) = $ true) if and only if there exists $s$ such that $(A,B)$ is an \textit{s-pair} in $\mathbb{G}_1$ and $(C,D)$ is an \textit{s-pair} in $\mathbb{G}_2$.
\end{myclaim}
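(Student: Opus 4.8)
The plan is to reduce both sides of the biconditional to a single congruence among the discrete logarithms of $A,B,C,D$, and then observe that this one congruence simultaneously characterizes \textbf{SameRatio} and the existence of a common $s$. First I would fix generators $g_1$ of $\mathbb{G}_1$ and $g_2$ of $\mathbb{G}_2$ and write $A = a\cdot g_1$, $B = b\cdot g_1$, $C = c\cdot g_2$, $D = d\cdot g_2$ with exponents $a,b,c,d \in \mathbb{Z}_p$. The hypothesis that the four elements lie in $\mathbb{G}_1\setminus\{0\}$ and $\mathbb{G}_2\setminus\{0\}$ is exactly what guarantees $a,b,c,d$ are nonzero, hence invertible modulo the prime $p$; this invertibility is what makes the ratios $b a^{-1}$ and $d c^{-1}$ meaningful and is essentially the only place the nonzero hypothesis is used.

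Next I would translate the pairing condition into the exponents. By bilinearity, $\hat{e}(A,D) = \hat{e}(g_1,g_2)^{ad}$ and $\hat{e}(B,C) = \hat{e}(g_1,g_2)^{bc}$. The key step is to pass from equality in $\mathbb{G}_T$ back to equality of exponents: since $\mathbb{G}_T$ has prime order $p$, non-degeneracy ($\hat{e}(g_1,g_2)\neq 1$) forces $\hat{e}(g_1,g_2)$ to be a generator of $\mathbb{G}_T$, so $\hat{e}(g_1,g_2)^{ad} = \hat{e}(g_1,g_2)^{bc}$ holds if and only if $ad \equiv bc \pmod p$. Hence \textbf{SameRatio}$((A,B),(C,D))$ is true if and only if $ad \equiv bc \pmod p$.

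For the \textit{s-pair} side, $(A,B)$ being an \textit{s-pair} in $\mathbb{G}_1$ means $s\cdot A = B$, i.e. $sa \equiv b \pmod p$, and $(C,D)$ being an \textit{s-pair} in $\mathbb{G}_2$ means $sc \equiv d \pmod p$. A single $s$ satisfying both exists precisely when $b a^{-1} \equiv d c^{-1} \pmod p$, which rearranges to $ad \equiv bc \pmod p$, the identical congruence obtained above. To present this cleanly I would split into the two implications. For the forward direction, assuming $ad\equiv bc$, I set $s := b a^{-1}$, so that $sa \equiv b$ makes $(A,B)$ an \textit{s-pair} by construction, while $d \equiv bc a^{-1} \equiv sc$ shows $(C,D)$ is the same \textit{s-pair}. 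For the reverse direction, $b\equiv sa$ and $d\equiv sc$ give $ad \equiv sac \equiv bc$ at once, so \textbf{SameRatio} holds.

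I do not anticipate a genuine obstacle, as the statement is essentially a bookkeeping exercise once everything is expressed in exponents. The one point requiring care is the passage from equality of pairing values to equality of exponents, which relies on the primality of the order of $\mathbb{G}_T$ together with non-degeneracy; without primality an element of $\mathbb{G}_T$ could fail to generate the whole group and that implication would need extra justification. The secondary point to state explicitly is why $a$ and $c$ are invertible, which is exactly the role played by the $\setminus\{0\}$ restrictions in the domain of $\hat{e}$.
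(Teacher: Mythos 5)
Your proof is correct and takes essentially the same approach as the paper's: both translate the pairing condition into exponents via bilinearity and compare them using the fact that $\hat{e}(g_1,g_2)$ is nontrivial in a group of prime order $p$. The only difference is cosmetic --- the paper parametrizes $B = s\cdot A$, $D = s'\cdot C$ and shows SameRatio holds iff $s \equiv s' \pmod{p}$, while you work with four independent discrete logs and the equivalent congruence $ad \equiv bc \pmod{p}$; your write-up is in fact more explicit than the paper's about the two points it glosses over (non-degeneracy forcing $\hat{e}(g_1,g_2)$ to generate $\mathbb{G}_T$, and the invertibility of $a$ and $c$).
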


\begin{proof}
Suppose there exist one element $s'$ that $s \cdot A = B$, $s' \cdot C = D$ and $s \neq s' \pmod{p}$. For some $a, c \in \mathbb{Z}_p^*$, we have $A = a \cdot g_1$ and $C = c \cdot g_2$. As defined in Definition \ref{def:sratio}, none of the elements $(a,c,s,s')$ is $0$.

Therefore,
\begin{gather*}
    \hat{e}(A,D) = \hat{e}(a \cdot g_1, s' \cdot C) = g_T^{acs'}\\
    \hat{e}(B,C) = \hat{e}(s \cdot A, c \cdot g_2) = g_T^{acs}
\end{gather*}

if and only if $s = s' \pmod{p}$, otherwise, $\mathbf{SameRatio}((p,q),(f,H))=$ false. As a result, no such $s'$ exists. 
\end{proof}

Finally, we can construct our special \textit{s-pair} as follows

\begin{definition} \label{def:sp-spair}
Given a bilinear mapping $\hat{e}: \mathbb{G}_1 \setminus \{0\} \times \mathbb{G}_2 \setminus \{0\} \rightarrow \mathbb{G}_T$ and a matrix $s \in \mathbb{Z}_p^{l \times k}$, a special \textit{s-pair} is a pair $(A, B)$ such that $A,B \in \mathbb{G}_1^{l \times k}$ or $A,B \in \mathbb{G}_2^{l \times k}$; and 
\begin{equation*}
    B[i,j] = A[i,j]^{s[i,j]}
\end{equation*} For such a special \textit{s-pair} $(A,B)$ in $\mathbb{G}_1$ or $\mathbb{G}_2$, we may also denote it as $\bm{rp}_{s}$. Given that a vector can be considered a matrix with a single column, we can also use the notation $\bm{rp}_{s}$ to represent an \textit{s-pair} when $s \in \mathbb{Z}_p^k$.

\end{definition}

\subsection{Assumptions}

Given a bilinear mapping $\hat{e}: \mathbb{G}_1 \setminus \{0\} \times \mathbb{G}_2 \setminus \{0\} \rightarrow \mathbb{G}_T$ with associated generators $\{g_1, g_2, g_T\}$ and group order $p$, our work builds upon a variety of standard assumptions, which are detailed below.

\begin{assumption} \label{asp:SXDH}
\textit{Symmetric External Diffie-Hellman (SXDH) assumption} \cite{ballard2005correlation}\\
It is hard to distinguish $\mathcal{D}_0 = (g_1, g_2, g_1^a, g_1^b, g_1^{ab})$ from $\mathcal{D}_1 = (g_1, g_2, g_1^a, g_1^b, g_1^c)$ where $a,b,c \xleftarrow[]{\$} \mathbb{Z}_p$. This also holds to the tuples $\mathcal{D}_0' = (g_1, g_2, g_2^a, g_2^b, g_2^{ab})$ and $\mathcal{D}_1' = (g_1, g_2, g_2^a, g_2^b, g_2^c)$ in different group.
\end{assumption}

\begin{assumption} \label{asp:klin}
\textit{K-Linear assumption} \cite{boneh2004short}\\
It is hard to distinguish $\mathcal{D}_0 = (g_1, g_2, g_1^{a_1}, g_1^{a_2}, \dots, g_1^{a_k}, g_1^{a_1b_1}, g_1^{a_2b_2},\dots,g_1^{a_kb_k}, g_1^{b_1 + b_2 + \dots + b_k})$ from $\mathcal{D}_1 = (g_1, g_2, g_1^{a_1}, g_1^{a_2}, \dots, g_1^{a_k}, g_1^{a_1b_1}, g_1^{a_2b_2},\dots,g_1^{a_kb_k}, g_1^{c})$ where $a_1,\dots,a_k, b_1,\dots,b_k, c \xleftarrow[]{\$} \mathbb{Z}_p$. This also holds in the group $\mathbb{G}_2$. 
\end{assumption}

For $a_1, a_2, ... , a_k \xleftarrow{\$} \mathbb{Z}_p^*$, we have the construction of

\begin{equation} \label{eq:A}
    \bm{A} = \begin{pmatrix}
    a_1 & 0 & & 0 \\
    0   & a_2 & & 0\\
    \vdots & \ddots &  & \\
    0 & 0 & \cdots & a_k\\
    1 & 1 & \cdots & 1 \\
    \end{pmatrix} \in \mathbb{Z}^{(k+1)\times k}_p
\end{equation} and 
\begin{equation} \label{eq:a}
    \bm{a}^{\perp} = \begin{pmatrix}
    a_1^{-1}\\
    a_2^{-1}\\
    \vdots\\
    a_k^{-1}\\
    -1\\
    \end{pmatrix} \in \mathbb{Z}^{(k+1)}_p
\end{equation}, with $\bm{A^{\intercal} a}^{\perp} = 0$. 

\begin{assumption} \label{asp:sp-klin}
\textit{Special k-Linear assumption} \cite{michalevsky2018decentralized}\\
Given a randomly generated matrix $\bm{A} \in \mathbb{Z}_p^{(k+1) \times k}$ and a vector $\bm{s} \in \mathbb{Z}_p^{(k+1)}$, the tuples $\mathcal{D}_0 = (g_1, g_2, g_1^{\bm{A}}, g_1^{\bm{As}})$ and $\mathcal{D}_1 = (g_1, g_2, g_1^{\bm{A}}, g_1^{\bm{s}})$ are computationally indistinguishable by any polynomial-time $\mathcal{A}$. The structure of matrix $\bm{A}$ is described in Eq.\ref{eq:A} and the vector $\bm{a}$ is derived from $\bm{A}$ as detailed in Eq.\ref{eq:a}.
\end{assumption}

\begin{assumption} \label{asp:s-cdh}
\textit{Square Computational Diffie-Hellman assumption} \cite{burmester1998equitable}\\
Given $(g, g^a)$ for a random number $a$ in a cyclic group $\mathbb{G}$ of order $p$, a $PPT$ algorithm $\mathcal{A}$ outputs $g^{a^2}$ with non-negligible probability.
    
\end{assumption}

\begin{assumption} \label{asp:KCA}
\textit{Knowledge of Coefficient assumption} \cite{bowe2017scalable} \\
Given a string of arbitrary length $h$, and a uniformly chosen $C \in \mathbb{G}_2$ (independent of $h$), an efficient algorithm $\mathcal{A}$ exists that can randomly generate $B \in \mathbb{G}_1$ and $D \in \mathbb{G}_2$. Meanwhile, for the same inputs $(C,h)$, there is an efficient deterministic algorithm $\mathcal{X}$ cable of extracting a scalar $b$. The probability that:

\begin{enumerate}
    \item $\mathcal{A}$ `succeeds', meaning it satisfies the condition: SameRatio($(g_1, B),(C,D)$)
    \item $\mathcal{X}$ `fails', meaning $B \neq g_1^{b}$
\end{enumerate}

is negligible.

\end{assumption}

\subsection{Proof of Knowledge}
We adopt the well-established Schnorr Identification Protocol \cite{schnorr1990efficient}, utilizing it as our Non-interactive Zero-knowledge Proof (NIZK). Provided with an \textit{s-pair} $rp_s = (A, B = s \cdot A)$ and a string $h$, we establish \textsc{NIZK} (Algorithm \ref{alg:NIZK}). This can serve as a proof that the originator of the string $h$ is aware of $s$ in the \textit{s-pair} $rp_s$. 

\begin{algorithm}
\caption{Construct a proof of knowledge of $s$}\label{alg:NIZK}
\begin{algorithmic}[1]

\Require $rp_s$ is an \textit{s-pair}
\Require $h$ is a string

\Function{NIZK}{$rp_s = (A,B), h$}
    \State $\alpha \xleftarrow[]{\$} \mathbb{Z}_p^*$
    \State $R \gets \alpha \cdot A$
    \State $c \gets$ \textbf{COMMIT}$(R || h) \in \mathbb{Z}_p^*$
    \State $u \gets \alpha + c \cdot s$
    
    \State \textbf{return} $\pi = (R, u)$

\EndFunction

\end{algorithmic}
\end{algorithm}

Furthermore, we define \textsc{VerifyNIZK} (Algorithm \ref{alg:vrfNIZK}), which checks the validity of the provided proof $\pi$.

\begin{algorithm}
\caption{Verify a proof of knowledge of $s$}\label{alg:vrfNIZK}
\begin{algorithmic}[1]

\Require $rp_s$ is an \textit{s-pair}
\Require $h$ is a string

\Function{VerifyNIZK}{$rp_s = (A,B), \pi=(R,u), h$}
    \State $c \gets$ \textbf{COMMIT}$(R || h) \in \mathbb{Z}_p^*$
    \If{$u \cdot A == R + c \cdot B$}
        \State \textbf{return} true
    \Else
        \State \textbf{return} false
    \EndIf

\EndFunction
\end{algorithmic}
\end{algorithm}

\subsection{Vector Commitment} 
We ensure our attribute-hiding property through the utilization of a Vector Commitment scheme, as described in \cite{catalano2013vector}. The summarized scheme is as follows:


\begin{definition}\label{def:VC}
This Vector Commitment system commits to an ordered sequence of attribute elements $\bm{v} = (v_1, v_2, \cdots, v_{l+1})$ as commitment $\bm{C}$, then opens it in a certain position of $\bm{v}$ to a corresponding attribute authority (\textit{AA}), and finally proves that only authorized value existed in the previously supplied commitment $\bm{C}$. The system normally consists of 4 algorithms:
\end{definition}

\begin{itemize}
    \item \textbf{Key Generation} $(1^\lambda, n) \rightarrow PP_{VC}$: This is a Decentralized Key Generation (DKG) algorithm. It takes as input the security parameter $\lambda$ and the number of attribute authorities, $n$, in the system, and outputs global public parameters $PP_{VC} = \{g_1, g_2, \{o_i\}, \{o_{i,j}\} \}$ where $i,j \in [n], i\neq j$. The element $o_i$ is generated and published by $\textit{AA}_i$. Following that, the elements $\{o_{i,j}\}$ can be issued by each $\textit{AA}_i$ based on the shared $\{o_i\}$.
    
    \item \textbf{Commitment} $(\bm{aux} = \{m_i\}_{i\in [n]}) \rightarrow \bm{C}$: This algorithm is run by a Data User (\textit{DU}). It takes as input the message $m_i$ generated based on the authorized attributes from $\textit{AA}_i, i\in [n]$, and outputs the commitment $\bm{C}$. 
    \item \textbf{Open} $(m_i, i, \bm{aux}) \rightarrow op_i$: This algorithm is also run by a \textit{DU}. It takes as input the auxiliary information $\bm{aux}$ and index $i$, and outputs the opening proof $op_i$.
    \item \textbf{Verify} $(\bm{C}, m_i, i, op_i) \rightarrow (1 ~or~ 0)$: The Verify algorithm is run by \textit{AA}. It takes as inputs the commitment $\bm{C}$, message $m_i$, index $i$, and opening proof $op_i$, and outputs the result of the verification. It outputs $1$ when it accepts the proof. 
\end{itemize}

\subsection{Decentralized Inner-Product Predicate Encryption}


\begin{definition}\label{def:IPPE}
 A Multi-authority Attribute-based Encryption with policy-hiding scheme \cite{michalevsky2018decentralized} consists of a tuple of Probabilistic Polynomial-Time (PPT) algorithms, such that:
 \end{definition}
 
\begin{itemize}
    \item \textbf{Setup} $(1^\lambda) \rightarrow PP$: The global setup algorithm is a decentralized generation algorithm that takes as input the security parameter $\lambda$ and then outputs the public parameters $PP$.
    
    \item \textbf{Authority Setup} $(PP, i) \rightarrow (PK_i,SK_i)$: The authority setup algorithm is run by each $\textit{AA}_i$. It takes as input public parameter $PP$ and authority index $i$, and outputs a pair of authority keys $(PK_i, SK_i)$.

    \item \textbf{Key Generation} $(PP, i, SK_i, \{PK\}, GID, \bm{v}) \rightarrow sk_{i, GID, \bm{v}}$: The key generation algorithm is run by each $\textit{AA}_i$. It takes as input the global public parameters $PP$, the authority index $i$, its secret key $SK_i$, all the public keys $\{PK_i\}_{i \in [n]}$, and \textit{DU}'s global identifier $GID$ and the attribute vector $\bm{v}$, and outputs the secret keys $sk_{i,GID,\bm{v}}$.

    \item \textbf{Encryption} $(PP, \{PK\}, \bm{x}, F) \rightarrow CT_F$: The data encryption algorithm is run by a Data Owner (\textit{DO}). It takes as inputs the global parameters $PP$, the public keys of all the authorities $\{PK_i\}_{i \in [n]}$, the ciphertext policy vector $\bm{x}$ and a file $F$, and outputs a ciphertext $CT_F$.
    
    
    \item \textbf{Decryption} $(\{sk_{i, GID, \bm{v}} \}_{i\in n}, CT_F) \rightarrow F$: The decryption algorithm is run by the authorized \textit{DU}. It takes as inputs the collection of secret keys $\{sk_{i, GID, \bm{v}}\}$ from $\textit{AA}_i$ and the ciphertext $CT_F$, and outputs the message $F$ if the access policy has been satisfied.
    
\end{itemize}

It's important to highlight that the algorithms delineated above do not hide the attributes vector $\bm{v}$ from the authorities. Rather, we incorporate the vector commitment $\bm{C}$ as an input for the \textbf{Key Generation} process. The detail of this procedure will be provided in the context of Section \ref{sec:design}.

\section{System Overview}\label{sec:overview}
\subsection{System Architecture}
The system architecture is shown in Figure \ref{fig:p1}, 
\begin{figure*}[h!]\centering
\includegraphics[width=\textwidth]{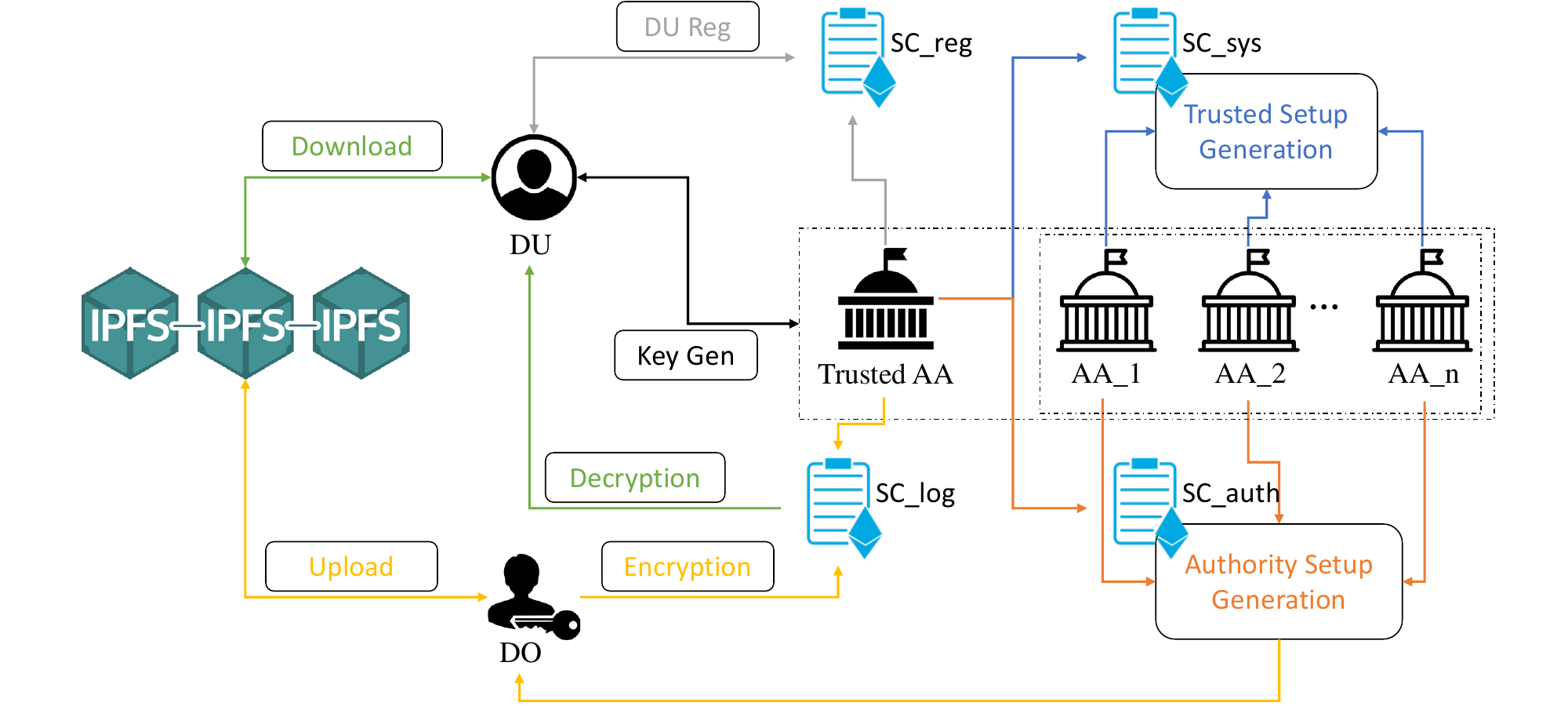}
\caption{The system consists of 6 processes, each of which is represented by a different color: Blue for the process \textbf{Trusted Setup}, orange for the process \textbf{Authority Setup}, gray for the process \textbf{Data User Registration}, black for the process \textbf{Key Generation}, yellow for the process \textbf{Encryption and Upload}, and green for the process \textbf{Download and Decryption}. And these connectors, Single or Double-Arrow, indicate the interactions between service users and two blockchain networks, Ethereum and IPFS. Note that these 4 contracts deployed on Ethereum are used for data governance, while IPFS is used for data storage. For a detailed description of the system flow between smart contracts, IPFS, and various entities, please check Section \ref{sec:overview} for the Interaction Overview and Section \ref{sec:design} for the System Design.} \label{fig:p1}
\end{figure*}
which comprises the following logical entities:


\textbf{Data Owner} (\textit{DO}): \textit{DO} is an entity (individual or organization) that owns a certain file $F$. For secure storage and sharing, \textit{DO} encrypts $F$ using the AES key $AK$ and uploads the encrypted file $CT_F$ to the IPFS network, records the returned file location $loc$, and embeds $AK$ and $loc$ into the metadata $M$ which is subsequently encrypted using the ABE system and published $CT_M$ in the Ethereum network. 

\textbf{Data User} (\textit{DU}): \textit{DU} is a data client for \textit{DO}. It asks the attribute authority \textit{AA} for permission to get the necessary attribute secret keys $\{sk\}$, which are then used to decrypt the associated $CT_M$ stored on the Ethereum network. After getting the key $AK$ and the location $loc$ from $M$, \textit{DU} can download the encrypted file $CT_F$ from the IPFS network and recover the original file $F$.

\textbf{Attribute Authority} (\textit{AA}): \textit{AA} is an entity (individual or organization) that contributes to the generation of the public parameters of the ABE system $PP_{ABE}$ and the vector commitment scheme $PP_{VC}$, publishes the public key $PK$ for the \textit{DO} to encrypt metadata $M$, owns a set of attributes and issues secret key $sk$ for the owned attributes upon the request of the \textit{DU}.

\textbf{Trusted Attributed Authority} ($\textit{AA}_{\text{trust}}$): $\textit{AA}_{\text{trust}}$ is a trusted attribute authority that mainly generates a secret key $sk$ for \textit{DU} and deploys system contracts for setup and registration. $\textit{AA}_{\text{trust}}$, unlike normal \textit{AA}, owns no attributes but is in charge of a specific position in the attribute vector $\bm{v}$. 

\textbf{Service User} (\textit(SU)): In the system, \textit(SU) is a general entity comprising \textit{DO}, \textit{DU} and \textit{AA}. 

\textbf{Participant} ($P$): $P$ is a special entity that represents each \textit{AA} during the process of \textbf{Trusted Setup}. The index $i$ of $P_i$ denotes the chronological order of each piece of public parameter generated and shared by \textit{AA}. 

\textbf{Trust Setup Contract} ($SC_{sys}$): The contract $SC_{sys}$ is deployed to the Ethereum network by the $\textit{AA}_{\text{trust}}$ and can only be invoked by an authorized \textit{AA} within the time window specified. It is responsible for generating the global public parameters $PP_{ABE}$.

\textbf{Authority Setup Contract} $(SC_{auth})$: Contract $SC_{auth}$ is deployed to the Ethereum network by the $\textit{AA}_{\text{trust}}$. It can only be invoked by the authorized \textit{AA} within the specified time window. It is used to generate the global public parameters $PP_{VC}$ and to keep track of the valid information about \textit{AA}'s address $addr$, public key $PK$, and supported attributes $\mathcal{S}$. 

\textbf{User Registration Contract} $(SC_{reg})$: Contract $SC_{reg}$ is deployed to the Ethereum network by the $\textit{AA}_{\text{trust}}$ and can be invoked by all the potential \textit{DU}s. To register the $addr$ in the system, \textit{DU} needs to make sufficient payment to the $SC_{reg}$ and then get back the $GID$ which can later be used to request secret key $sk$ from \textit{AA}. 

\textbf{Utility Contract} ($SC_{util}$): Contract $SC_{util}$ is deployed to the Ethereum network by the $\textit{AA}_{\text{trust}}$ and can only be invoked by other  contracts deployed by $\textit{AA}_{\text{trust}}$. It is mainly used to verify group elements published by \textit{AA}.

\textbf{Log Contract} ($SC_{log}$): Contract $SC_{log}$ is deployed to the Ethereum network by the $\textit{AA}_{\text{trust}}$. When it receives a new transaction from \textit{DO}, it records the encrypted data $CT_M$ of metadata $M$ and triggers the event to the subscribers.  

\textbf{Blockchain}: Each user (\textit{DO, DU, AA} and $\textit{AA}_{\text{trust}}$) possesses a pair of keys $(BPK, BSK)$ and a corresponding wallet address $addr$ on the blockchain. Our system employs two blockchains: IPFS for data storage and Ethereum for data governance.

\subsection{Interactions overview} \label{overview}

In this section, we describe the overview of our proposed system to show how smart contracts, IPFS, vector commitment, and MA-CP-ABE with policy-hiding are composed together to build a secure, privacy-preserving, and blockchain-enabled data governance system. When it ought to be clear from the context, we omit most indices, like $i$ and $j$ of elements, and superscript in $\bm{rp}_s^2$ for readability.

\renewcommand{\labelenumii}{\arabic{enumii}}

\begin{enumerate}
    \item \textbf{Trusted Setup} 
    \begin{enumerate}
        \item First, a community of normal attribute authorities (\textit{AA}s) with size $n-1$ and a special trusted attribute authority ($\textit{AA}_{\text{trust}}$) must be determined. $\textit{AA}_{\text{trust}}$ selects the security parameter $\lambda$ and two generators $g_1, g_2$ for the bilinear mapping, and defines following algorithms: \textsc{COMMIT}, \textsc{NIZK}, \textsc{VerifyNIZK} and \textsc{powerMulti}.
        
        \item $\textit{AA}_{\text{trust}}$ deploys one system contract $SC_{sys}$ and one utility contract $SC_{util}$. 
        
        \item Each \textit{AA} randomly samples a set of secret elements $e$: two matrixes $\bm{A} \in \mathbb{Z}_p^{(k+1) \times k} $ and $ \bm{U}\in \mathbb{Z}_p^{(k+1) \times (k+1)}$, two scalars $\alpha_{\bm{A}} $ and $ \alpha_{\bm{U}}$ in $\mathbb{Z}_p^*$, and two scaled matriex $\alpha_{\bm{A}} \cdot \bm{A} $ and $ \alpha_{\bm{U}} \cdot \bm{U}$, and publishes a corresponding set of \textit{s-pair} $\{\bm{rp}_{\bm{A}}, \bm{rp}_{\bm{U}}, \bm{rp}_{\alpha_{\bm{A}}}, \bm{rp}_{\alpha_{\bm{U}}}, \bm{rp}_{\alpha \cdot \bm{A}}, \bm{rp}_{\alpha \cdot \bm{U}}\}$ as defined in Def.\ref{def:spair} and Def.\ref{def:sp-spair} to the contract $SC_{sys}$.
        
        \item After that, \textit{AA} computes and publishes the commitments $h := \text{\textsc{COMMIT}}(\{h_{s}\} ||)$ as showed in Alg. \ref{alg:commit} to $SC_{sys}$, where $h_{s} := \text{\textsc{COMMIT}}(\bm{rp}_s), s \in e$.

        \item Every \textit{AA} then needs to prove the knowledge of each element $s \in e$ by outputting the proofs $\{\pi_s\}$ using algorithm \textsc{NIZK} (Alg. \ref{alg:NIZK}) as the argument of function \textsc{Prove} of contract $SC_{sys}$, which verifies them using algorithm \textsc{VerifyNIZK} (Alg. \mbox{\ref{alg:vrfNIZK}}).

        \item In the \textbf{Round 1}, we define one attribute authority \textit{AA} as participant $P_1$ who firstly publishes group elements in $\mathbb{G}_1$: $(V_1 :=g_1^{\bm{A}_1}, \theta_{V_1} := g_1^{\alpha_{\bm{A}_1}}, V_1' := g_1^{\alpha_{\bm{A}_1} \cdot \bm{A}_1} )$, based on the previously verified set of elements $e$. 
        
        \item Participant $P_{i = 2, \dots, n}$ computes $(V_i, \theta_{V_i},V_i')$ based on previous $( V_{i-1}, \theta_{V_{i-1}}, V_{i-1}' )$ using algorithm \textsc{powerMult} (Alg. \ref{alg:pMu}) and publishes these as the arguments of the function \textsc{Compute} of contract $SC_{sys}$ to check validity. 
        
        \item We define the last valid $V$ received by contract $SC_{sys}$ as one piece of the public parameter $g_1^{\bm{A}}$.

        \item In the \textbf{Round 2}, the first \textit{AA}, also known as participant $P_1$, publishes group elements in $\mathbb{G}_1$: $( W_1 := ((g_1^{\bm{A}})^{\intercal})^{\bm{U}_1}, \theta_{W_1} := g_1^{\alpha_{\bm{U}_1}}, W_1' := ((g_1^{\bm{A}})^{\intercal})^{\alpha_{\bm{U}_1} \bm{U}_1} )$, also based on the previously verified set of elements $e$.
        
        \item Participant $P_i$, where $i = 2, \dots, n$ computes its $( W_i, \theta_{W_i}',W_i' )$ based on previous $( W_{i-1}, \theta_{W_{i-1}}',W_{i-1}' )$ using algorithm \textsc{powerMult} and publishes these as the arguments of the function \textsc{Generate}. 
        
        \item We also define the last valid $W$ received by contract $SC_{sys}$ as last piece of the public parameter $g_1^{\bm{U}^{\intercal} \bm{A}}$. Therefore, we have $PP_{ABE} := \{g_1, g_2, g_1^{\bm{A}}, g_1^{\bm{U}^{\intercal} \bm{A}}\}$.
        
    \end{enumerate}
    \item \textbf{Authority Setup}
    \begin{enumerate}
        \item $\textit{AA}_{\text{trust}}$ deploys contract $SC_{auth}$ for authority setup.
        
        \item Each \textit{AA} randomly samples another set of secret element $e'$: a matrix $\bm{X} \in \mathbb{Z}_p^{(k+1) \times (k+1)}$, a vector $\bm{\tau} \in \mathbb{Z}_p^{k+1}$, two numbers $\sigma, z \in \mathbb{Z}_p^*$, a scalar $\alpha_{z}$ and a scaled number $\alpha_z \cdot z$. Using that, \textit{AA} takes $SK := \{ \bm{X}, \bm{\tau}, \sigma\}$ as secret keys, and publishes a corresponding set of \textit{s-pair} $\{ \bm{rp}_{\bm{X}}, \bm{rp}_{\bm{\tau}}, \bm{rp}_{\sigma}, \bm{rp}_{z}, \bm{rp}_{\alpha_z \cdot z}\}$ to the contract $SC_{auth}$.

        \item After that, \textit{AA} computes and publishes the commitment $h' := \text{\textsc{COMMIT}}(\{h_s\} ||)$ to $SC_{auth}$, where $h_s := \text{\textsc{COMMIT}}(\bm{rp}_s), s \in e'$. 

        
        \item Every \textit{AA} then needs to prove the knowledge of elements $s \in e'$ by generating the proofs $\{\pi_s\}$ using algorithm \textbf{NIZK} as the argument of function \textsc{Prove} of contract $SC_{auth}$ for validity check. 
        
        
        
        \item We define each \textit{AA} with index $i \in [n-1]$ based on the receiving order of the complete set of valid $\{\pi_s\}_{s \in e'}$ and set attribute authority $\textit{AA}_{\text{trust}}$ with index $n$.
        
        \item Therefore, we have the verified sets of elements $PK_i := \{g_1^{\bm{X}_i^{\intercal} \cdot \bm{A}}, \hat{e}(g_1^{\bm{\tau}_i^{\intercal}\bm{A}},g_2), g_2^{\sigma}\}$ and $\{o_i := g_1^{z_i}, g_1^{\alpha_{z_i}}, g_1^{\alpha_{z_i} \cdot z_i} \}$ for each $AA_i$.
        

        \item In the last stage, for each $i,j \in [n], j \neq i$, $AA_i$ needs to compute a set of group elements in $\mathbb{G}_1: (O_i:=\{(o_{j})^{z_i}\}, \theta_{O_i} :=\{(g_1^{\alpha_{z_j}})^{\alpha_{z_i}}\}, O'_i:= \{(g_1^{ \alpha_{z_j}\cdot z_j})^{\alpha_{z_i} \cdot z_i}\}$. Then $AA_i$ publishes these elements, with the number of supported attributes $l_i$ as the argument of the function \textsc{Setup}.

        \item The contract $SC_{auth}$ checks the validity of these elements published by $AA_i, i \in [n]$ and then registers its address $addr_i$ with the elements $(l_i, PK_i)$. 

        \item In the end, we have $PP_{VC} := \{g_1, g_2, \{o_i\}_{i \in [n]}, \{o_{i,j}\}_{i, j \in [n], i\neq j}\}$ for vector commitment scheme.
        
    \end{enumerate}
    \item \textbf{Data User Registration}

    \begin{enumerate}
        \item $\textit{AA}_{\text{trust}}$ deploys contract $SC_{reg}$ for service registration.
        \item Data User (\textit{DU}) makes a direct registration payment to the contract $SC_{reg}$ to get the global identifier $GID$ which is the hash value of DU's address $addr$.
        \item Afterwards, \textit{DU} can setup a secure channel with each $AA_i, i \in [n]$ that possesses the needed attributes and can verify \textit{DU}'s identity.

        \item $AA_i$ verifies \textit{DU}'s identity and sends back the set of acknowledged attributes $\mathcal{R}_{i,GID}$ through the secure channel.
        \item Upon receiving all the $\mathcal{R}_{i,GID}$ from $AA_i$, \textit{DU} defines a set of `N/A' attributes $\mathcal{R}_{j,GID}$ for those $AA_j, i \neq j$ can not issue the attribute set and finally gets a complete set of attributes $\mathcal{R}_{GID}$ by combing $\mathcal{R}_{i, GID}$ and $\mathcal{R}_{j, GID}$ together.

    \end{enumerate}
    
    \item \textbf{Key Gen}

    \begin{enumerate}
        \item \textit{DU} generates an attribute vector $\bm{v}_{GID}$ from set of attributes $\mathcal{R}_{GID}$, creates a vector commitment $\bm{C}$ for $\bm{v}_{GID}$ and sends it with opening proof $op_i$ to each $AA_i, i \in [n]$ through separate secure channels.

        \item $AA_i, i \in [n]$ firstly checks the validity of its responsible part in the commitment $\bm{C}$ using $op_i$, then issues \textit{DU}'s requested attribute secret key $sk_{i, GID, \bm{C}}$, and finally sends it back to \textit{DU} through the channel.

        \item Upon receiving responses from each $AA_i, i\in[n]$, \textit{DU} gets a complete set of secret keys $\{sk_{i,GID,\bm{C}}\}_{i \in [n]}$. 
    \end{enumerate}

    \item \textbf{Encryption and Upload} 
    \begin{enumerate}
        \item $\textit{AA}_{\text{trust}}$ deploys last system contract $SC_{log}$ to record encrypted related information of file $F$
        \item Data Owner (\textit{DO}) randomly samples an AES key $AK$, encrypts $F$ to obtain the ciphertext $CT_F$, and uploads it to the IPFS network. 
        \item After successfully receiving the $CT_F$ from \textit{DO}, IPFS network returns a special hash value $loc$ as a file location on the IPFS network.
        
        \item Then, \textit{DO} constructs a metadata $M := (K, loc)$, specifies a policy vector $\bm{x}$ based on selected attributes from each $AA_i$, uses published $\{PK_i\}$ to encrypt the metadata $M$ and publishes this encrypted information $CT_M$ to contract $SC_{log}$.

    \end{enumerate}
    
    \item \textbf{Download and Decryption}
    \begin{enumerate}
        \item \textit{DU} reads every new coming $CT_M$ from the contract $SC_{log}$ and checks if its owned secret keys $\{sk_{i, GID, \mathbf{C}}\}_{i \in [n]}$ satisfies the access policy $\bm{x}$ to recover the metadata $M$. 
        
        \item \textit{DU} retrieves the file location $loc$ and AES key $AK$ from the metadata $M$ and requests the ciphertext $CT_F$ from the IPFS network with the file location $loc$.
        \item \textit{DU} uses the AES key $AK$ to recover the original file $F$. 
    \end{enumerate}

\end{enumerate}
\section{System Design}\label{sec:design}
\newcommand{\INDSTATE}[1][1]{\STATE\hspace{#1\algorithmicindent}}

In this section, we provide more details on \textbf{Trusted Setup}, \textbf{Authority Setup}, \textbf{Data User Registration}, \textbf{Key Generation}, \textbf{Encryption and Upload}, and \textbf{Download and Decryption} processes.

\subsection{Trusted Setup} \label{sub_sec:trusted_setup}
The process of \textbf{Trusted Setup} consists of 4 stages: \emph{Initiate}, \emph{Commit and Reveal}, \emph{Verify}, and \emph{Generate}, and finally outputs the global public parameter $PP_{ABE}$ for the ABE system. 

In the initial three stages, each attribute authority \textit{AA} sends its transactions independently to the contract $SC_{sys}$. In contrast, during the final \emph{Generate} stage, each participant $P$ (where we use the placeholder notation $P$ in \emph{Commit and Reveal} to represent each \textit{AA}) must send transactions to $SC_{sys}$ in a sequential manner. This sequentiality is necessary because each incoming transaction is generated based on the preceding $P$'s transaction received by contract $SC_{sys}$.

\subsubsection{Initiate} 

At the start, a fixed-numbered community of size $n$ will be determined, which will include all of the normal attribute authorities \textit{AA} and one special trusted authority $\textit{AA}_{\text{trust}}$. $\textit{AA}_{\text{trust}}$ represents this community to set the global security parameter to be $\lambda$ and the generators of the $\mathbb{G}_1, \mathbb{G}_2$ with prime order $p$ to be $g_1, g_2$ respectively. Therefore, the bilinear map can be $\hat{e}: \mathbb{G}_1 \setminus \{0\} \times \mathbb{G}_2 \setminus \{0\} \rightarrow \mathbb{G}_T$. 

$\textit{AA}_{\text{trust}}$ also deploys two distinct system contracts: contract $SC_{sys}$ for trusted setup and contract $SC_{util}$ for resolving the problem of allowing complex cryptographic computations to be used in the system. $\textit{AA}_{\text{trust}}$ also specifies the deadlines $(ddl_1, ddl_2, ddl_3)$ for the process \textbf{Trusted Setup} and sets an authorized list $AAlist$ to restrict $SC_{sys}$ access. For a simple description of the system, we assume that each attribute authority \textit{AA} submits the required transactions within the deadlines. The pseudo codes of these two contracts are provided in Appendix \ref{sc-sys} and \ref{sc-util}, respectively.

To realize the generation of $PP_{ABE}$, this process highly depends on the interaction between each attribute authority \textit{AA} and contract $SC_{sys}$, which has five main functions, \textsc{Commit}, \textsc{Reveal}, \textsc{Prove}, \textsc{Compute} and \textsc{Generate} with the help from contract $SC_{utils}$. Generally, these functions can only be invoked by a blockchain address owned by \textit{AA}, which is included in the authorized list $AAlist$, and executed before the pre-defined deadline $ddl_{1,2,3}$. Some variables used here are listed below:

\begin{enumerate}
    \item $msg.sender$ (Global Variable): The sender of the message or transaction.
    \item $block.timestamp$ (Global Variable): The current block timestamp as seconds.
    
    \item $AAlist$ (State Variable): A list of pre-verified blockchain addresses belonging to attribute authorities set by a trusted authority. 
    \item $ddl_1$ (State Variable): A deadline that only allows transaction calls within the pre-set time window for the stage \emph{Commit and Reveal}.
    \item $ddl_2$ (State Variable): A deadline that only allows transaction calls within the pre-set time window for the stage \emph{Verify}.
    \item $ddl_3$ (State Variable): A deadline that only allows transaction calls within the pre-set time window for the stage \textit{Compute and Generate}.

\end{enumerate}

\subsubsection{Commit and Reveal} \label{C&R}
Every \textit{AA} randomly picks a set of elements $e$: a matrix $\bm{A} \xleftarrow{\$} \{ \text{Diagonal matrices in } \mathbb{Z}_p^{k \times k} \} \cup \begin{pmatrix} \mathbf{1} \end{pmatrix}$, a matrix $\bm{U} \xleftarrow{\$} \mathbb{Z}_p^{(k+1) \times (k+1)}$ \footnote{The value of $k$ in this context does not derive from the security parameter $\lambda$. Rather, it is a reference to the \textit{k-Lin} assumption \mbox{\ref{asp:klin}} upon which our construction is predicated.}, their corresponding scalar values, $\alpha_{\bm{A}}$ and $\alpha_{\bm{U}}$, and scaled matrixes $\alpha_{\bm{A}} \cdot \bm{A}, \alpha_{\bm{U}} \cdot \bm{U}$. 

Then, \textit{AA} has \begin{equation}
    e = \{\bm{A}, \bm{U}, \alpha_{\bm{A}}, \alpha_{\bm{U}}, \alpha_{\bm{A}} \cdot \bm{A}, \alpha_{\bm{U}} \cdot \bm{U} \}
\end{equation} and then generates a set of \textit{s-pair}.

For such element $s \in e$, we refer to the \textit{s-pair} in $\mathbb{G}_1$ by $\bm{rp}_s$ and in $\mathbb{G}_2$ by $\bm{rp}_s^2$ as Definition \ref{def:spair}. These \textit{s-pair} in both $\mathbb{G}_1$ and $\mathbb{G}_2$ are listed as follows:
\begin{itemize} \label{spairs}
    \item For matrix $\bm{A}$: $(\bm{rp_{\bm{A}}}, \bm{rp_{\bm{A}}}^2) = (g, g^{\bm{A}})$
    \item For matrix $\bm{U}$: $(\bm{rp_{\bm{U}}},\bm{rp_{\bm{U}}}^2) = (g^{\bm{A}^{\intercal}}, g^{\bm{A}^{\intercal}\bm{U}})$
    
    \item For scalar $\alpha_{\bm{A}}$: $(\bm{rp}_{\alpha_{\bm{A}}},\bm{rp}_{\alpha_{\bm{A}}}^2) = (g, g^{\alpha_{\bm{A}}})$
    \item For scalar $\alpha_{\bm{U}}$: $(\bm{rp}_{\alpha_{\bm{U}}},\bm{rp}_{\alpha_{\bm{U}}}^2) = (g, g^{\alpha_{\bm{U}}})$
    
    \item For scaled matrix $\alpha_{\bm{A}} \bm{A}$: $(\bm{rp}_{\alpha_{\bm{A}} \bm{A}},\bm{rp}_{\alpha_{\bm{A}} \bm{A}}^2) = (g, g^{\alpha_{\bm{A}} \cdot \bm{A}})$
    \item For scaled matrix $\alpha_{\bm{U}}\bm{U}$: $(\bm{rp}_{\alpha_{\bm{U}} \bm{U}},\bm{rp}_{\alpha_{\bm{U}} \bm{U}}^2) = (g^{\bm{A}^{\intercal}}, g^{\alpha_{\bm{U}} \cdot \bm{A}^{\intercal}\bm{U}})$
\end{itemize}

Other than these \textit{s-pair} listed above, \textit{AA} also commits each element $s \in e$ using Algorithm \ref{alg:commit}. For each $s \in e$:
\begin{equation*}
    h_{s} := \textsc{COMMIT}(\bm{rp}_s || \bm{rp}_s^2)
\end{equation*}

Subsequently, the overall commitment is:
\begin{equation}
    h := \textsc{COMMIT}(h_{\bm{A}} || h_{\bm{U}} || h_{\bm{\alpha_{\bm{A}}}} ||
    h_{\alpha_{\bm{U}}} || h_{\alpha_{\bm{A}} \bm{A}} || h_{\alpha_{\bm{U}} \bm{U}})
\end{equation}

After that, \textit{AA} publishes the commitment $h$ to the contract $SC_{sys}$ through blockchain transaction. The state variable $h\_collector$ of $SC_{sys}$ (Algorithm \ref{SS1}) will store the value $h$ with the key as $msg.sender$, also known as \textit{AA}'s blockchain address. Apart from $h\_collector$, we define few state variables used in contract $SC_{sys}$ as follows:

\begin{enumerate}

    \item $h\_collector$ (State Variable): A mapping collection from the blockchain address belonged to one attribute authority to its commitment $h$
    \item $unverified\_elements$ (State Variable): A mapping collection from the blockchain address belonged to one attribute authority to its unverified list of \textit{s-pair}s $L = \{(\bm{rp}_s, \bm{rp}_s^2) | s \in e\}$ in both $\mathbb{G}_1$ and $\mathbb{G}_2$
    \item $verified\_elements$ (State Variable): A mapping collection from the blockchain address belonged to one attribute authority to its verified list of \textit{s-pair}s $L = \{(\bm{rp}_s, \bm{rp}_s^2) | s \in e\}$ in both $\mathbb{G}_1$ and $\mathbb{G}_2$
    
\end{enumerate}  

After $h$ has been received by contract $SC_{sys}$, the sender needs to reveal committed element $s \in e$ by passing a list of \textit{s-pair} in both $\mathbb{G}_1$ and $\mathbb{G}_2$
\begin{equation}
    L_{rp} = \{(\bm{rp}_s, \bm{rp}_s^2) | s \in e\}
\end{equation} as argument of the function \textsc{Reveal} (Algorithm \ref{SS2}) before deadline $ddl_1$, which checks the existence of the $h$ published by $msg.sender$, and then verify that indeed $h = \textsc{COMMIT}(\{h_{s}\} ||)$ \footnote{For brevity, we will use this shorthand notation to represent the above concatenation where $s$ takes on all value in the set $e = \{\bm{A}, \bm{U}, \alpha_{\bm{A}}, \alpha_{\bm{U}}, \alpha_{\bm{A}} \bm{A}, \alpha_{\bm{U}}\bm{U} \}$} by using the utility function \textsc{Hash} (Algorithm \ref{util1}), which works similarly as Algorithm \ref{alg:commit}. Finally, each pair $(\bm{rp}_s, \bm{rp}_s^2), s \in e$ will be stored in another state variable $unverified\_elements$ with the key as $msg.sender$. 


\subsubsection{Verify}
After deadline $ddl_1$ set by $\textit{AA}_{\text{trust}}$ in the first stage \emph{Initiate}, the system enters into the stage \emph{Verify}. In this stage, we need to check that each attribute authority \textit{AA} possesses the knowledge of the exponent $s$ used in the list of \textit{s-pair} $L$.

Every \textit{AA} generates the proof $\pi_{s} := \textsc{NIZK}(\bm{rp}_s, h || h_s)$ using Algorithm \ref{alg:NIZK} for each $s \in e$, and broadcasts these proofs as a list \begin{equation}
    L_{\pi} = \{\pi_{\bm{A}}, \pi_{\bm{U}}, \pi_{\alpha_{\bm{A}}}, \pi_{\alpha_{\bm{U}}}, \pi_{\alpha_{\bm{A}} \cdot \bm{A}}, \pi_{\alpha_{\bm{U}} \cdot \bm{U}}\}
\end{equation} through a blockchain transaction to get them verified. The function \textsc{Prove} (Algorithm \ref{SS3}) from contract $SC_{sys}$ takes input $L_{\pi}$ and processes this verification works. 

The function \textsc{Prove} firstly calls function \textsc{SameRatio} of contract $SC_{util}$ (Algorithm \ref{util2}) similar as Algorithm \ref{alg:SameRatio} to examine the authenticity of the published $\bm{rp}_s$ and $\bm{rp}_s^2$. Afterwards, it computes $h_{tmp} := h || \textsc{COMMIT}(\bm{rp}_s)$, and takes $h_{tmp}$ with verified $\bm{rp}_s$ and provided $\pi_s$ as input to the function \textsc{CheckPoK} of contract $SC_{util}$ (Algorithm \ref{util3}), which works similarly as Algorithm \ref{alg:vrfNIZK} and returns \textit{true} if the given proof $\pi_s$ is valid. Finally, the function \textsc{Prove} will remove the list of \textit{s-pair} $L_{rp}$ from the state variable $unverified\_elements$ and store it in the state variable $verified\_elements$. This indicates that the \textit{AA} possesses knowledge of the exponents for the set of \textit{s-pair}.

\subsubsection{Compute and Generate} \label{C&G}
In this stage, the system will generate the public parameters for the attribute-based encryption in two rounds. We use below notation \textsc{powerMulti}$(A, B)$ for the following algorithm \ref{alg:pMu}:

\begin{algorithm}
\caption{Computing power matrix $A$ by matrix $B$}\label{alg:pMu}
\begin{algorithmic}[1]

\Require group elements $A$ and matrix $s$ have same size $l \times k$

\Function{powerMulti}{$A, s$}

\For {$i \leftarrow 1, l$}
    \For {$j \leftarrow 1, k$}
        \State $B[i,j] \gets A[i,j]^{s[i,j]}$
    \EndFor
\EndFor
\State \Return $B$
\EndFunction

\end{algorithmic}
\end{algorithm}

\textbf{Round 1:} We define the first attribute authority \textit{AA} as participant $P_1$, who broadcasts $( V_1, \theta_{V_1}, V_1' )$ as argument of function \textsc{Compute} in contract $SC_{sys}$ (Algorithm \ref{SS4}). The elements $( V_1, \theta_{V_1}, V_1' )$ are constructed as follows:
\begin{align*}
    V_1 &:=  g_1^{\bm{A}_1}\\
    \theta_{V_1} &:= g_1^{\alpha_{\bm{A}_1}}\\
    V_1' &:= g_1^{ \alpha_{\bm{A}_1} \cdot \bm{A}_1}\\
\end{align*}
And the next participant $P_i, i = 2,3,\dots,n$, generates corresponding elements $(V_i, \theta_{V_i}, V_i')$ using Algorithm \ref{alg:pMu}, and also broadcasts them to the contract $SC_{sys}$:
\begin{align*}
    V_i &:= \textsc{powerMulti}(V_{i-1}, \bm{A}_i)\\
    \theta_{V_i} &:= (\theta_{V_{i-1}})^{\alpha_{\bm{A}_i}}\\
    V_i' &:= \textsc{powerMulti}(V_{i-1}', \alpha_{\bm{A}_i}\bm{A}_i)\\
\end{align*}

Since receiving the elements $(V_1, \theta_1, V_1')$ from first participant $P_1$, function \textsc{Computer} of $SC_{sys}$ checks the validity of each incoming elements $(V_i, \theta_{V_i}, V_i')$ published by $P_i$. 

In the end, we define the last valid $V_i$ as one piece of the public parameter:
\begin{multline*}
     g_1^{\bm{A}} = \textsc{powerMulti}(\textsc{powerMulti}(\dots(\\
    \textsc{powerMulti}(\textsc{powerMulti}(g_1^{\bm{A}_1}, \bm{A}_2), \bm{A}_3)\dots,\bm{A}_n)
\end{multline*}

\textbf{Round 2:} In this round, we also define the first attribute authority \textit{AA} as participant $P_1$, who broadcasts $( W_1, \theta_1', W_1' )$ as argument of the function \textsc{Generate} in contract $SC_{sys}$ (Algorithm \ref{SS5}). The elements $(W_1, \theta_1', W_1')$ are constructed as follows:
\begin{align*}
    W_1 &:= ((g_1^{\bm{A}})^{\intercal})^{\bm{U}_1}\\
    \theta_1' &:= g_1^{\alpha_{\bm{U}_1}}\\
    W_1' &:= ((g_1^{\bm{A}})^{\intercal})^{\alpha_{\bm{U}_1} \bm{U}_1}\\
\end{align*}

And the next participant $P_i$ where $i = 2,3,\dots,n$, generates corresponding elements $(W_i, \theta_{W_i}, W_i')$ using Algorithm \ref{alg:pMu}, and also broadcasts them to the contract $SC_{sys}$ :

\begin{align*}
    W_i &:= \textsc{powerMulti}(W_{i-1}, \bm{U}_i)\\
    \theta_{W,i} &:= (\theta_{W_{i-1}})^{\alpha_{\bm{U},i}}\\
    W_i' &:= \textsc{powerMulti}(W_{i-1}', \alpha_{\bm{U}, i}\bm{U}_i)\\
\end{align*}

Invoked by these transactions, contract $SC_{sys}$ checks the validity of each received elements $(W_i, \theta_{W_i}, W_i')$ and updates the value of $W_i$.

As a result of this procedure, the final piece of the public parameter is defined as the last valid $W_i$ received:
\begin{multline*}
     g_1^{\bm{U}^{\intercal}\bm{A}} = W_n = \textsc{PowerMulti}(\textsc{PowerMulti}(\dots(\\
    \textsc{PowerMulti}(\textsc{PowerMulti}(((g_1^{\bm{A}})^{\intercal})^{\bm{U}_1}, \bm{U}_2), \dots,\bm{U}_n))))^{\intercal}
\end{multline*}

At the end of this stage, each Service User (\textit{SU}) can easily get the global public parameters $PP_{ABE}$ of the attribute-based encryption system based on the published values. 

\begin{equation}
    PP_{ABE} := \{g_1, g_2, g_1^{\bm{A}}, g_1^{\bm{U}^{\intercal}\bm{A}}\}
\end{equation}

\subsection{Authority Setup} \label{sub_sec:auth_setup}
This step consists of 3 stages: \emph{Commit and Reveal}, \emph{Verify}, and \emph{Generate}, and finally outputs another global public parameter $PP_{VC}$ and public key $PK$ for each attribute authority \textit{AA}. 

\subsubsection{Initiate}
The contract $SC_{auth}$ deployed by trusted authority $\textit{AA}_{\text{trust}}$ has 4 main functions, \textsc{Commit}, \textsc{Reveal}, \textsc{Prove} and \textsc{Generate}, which also interacts with utility function $SC_{util}$. Its accessibility is also limited by deadlines $(ddl_1', ddl_2', ddl_3')$ and the authorized list $AAlist$ set by $\textit{AA}_{\text{trust}}$. The pseudo-code of contract $SC_{auth}$ is provided in Appendix \ref{sc-au}.

\subsubsection{Commit and Reveal}
Every \textit{AA} firstly samples a set of elements $e'$: a matrix $\bm{X} \xleftarrow{\$} \mathbb{Z}_p^{(k+1) \times (k+1)}$, a vector $\bm{\tau} \xleftarrow{\$} \mathbb{Z}_p^{k+1}$, two secret elements $\sigma, z \in \mathbb{Z}_p^*$, a scalar $\alpha_{z}$ and a scaled element $\alpha_z \cdot z$. 

Therefore, the \textit{AA} defines a set $e'$ for vector commitment as:
\begin{equation*}
    e' := \{z, \alpha_z, \alpha_z\cdot z\}
\end{equation*} and also designates:
\begin{equation*}
    SK := \{ \bm{X}, \bm{\tau}, \sigma \}
\end{equation*} as the secret key. Then, \textit{AA} computes a set of \textit{s-pair} for each element $s$ in both $e'$ and $SK$. We refer to the \textit{s-pair} in $\mathbb{G}_1$ by $\bm{rp}_s$, and the \textit{s-pair} in $\mathbb{G}_2$ by $\bm{rp}_s^2$ as Definition \ref{def:spair}. These \textit{s-pair} are defined as follows:

\begin{itemize}
    \item For matrix $\bm{X}^{\intercal}: \bm{rp}_{\bm{X}} := (g_1^{\bm{A}}, g_1^{\bm{X}^{\intercal} \cdot \bm{A}})$
    \item For vector $\bm{\tau}^{\intercal}: \bm{rp}_{\bm{\tau}} := (g_1^{\bm{A}}, g_1^{\bm{\tau}^{\intercal}\bm{A}})$
    \item For element $\sigma: \bm{rp}^2_{\sigma} := (g_2, g_2^{\sigma})$ 

    \item For element $z: (\bm{rp}_z, \bm{rp}^2_z) = (g, g^z)$
    \item For scalar $\alpha_z$: $(\bm{rp}_{\alpha_z}, \bm{rp}^2_{\alpha_z}) = (g, g^{\alpha_z})$
    \item For scaled element $\alpha_z z$: $(\bm{rp}_{\alpha_z z}, \bm{rp}^2_{\alpha_z z}) = (g, g^{\alpha_z \cdot z})$
\end{itemize}

After that, \textit{AA} computes $h'$ using Algorithm \ref{alg:commit} as follows:
\begin{equation}
    \begin{gathered}
        h_{\bm{X}} := \textsc{COMMIT}(\bm{rp}_{\bm{X}}) \quad h_{\bm{\tau}} := \textsc{COMMIT}(\bm{rp}_{\bm{\tau}})\\
        h_{\sigma} := \textsc{COMMIT}(\bm{rp}^2_{\sigma}) \quad h_{z} := \textsc{COMMIT}(\bm{rp}_{z} || \bm{rp}_z^2)\\
        h_{\alpha_z} := \textsc{COMMIT}(\bm{rp}_{\alpha_z} || \bm{rp}_{\alpha_z}^2) \\
        h_{\alpha_z \cdot z} := \textsc{COMMIT}(\bm{rp}_{\alpha_z \cdot z} || \bm{rp}_{\alpha_z \cdot z}^2)\\
        h' := \textsc{COMMIT}(h_{\bm{X}} || h_{\bm{\tau}} || h_{\sigma} || h_{z} || h_{\alpha_z} || h_{\alpha_z \cdot z})
    \end{gathered}
\end{equation} and broadcasts it to the contract $SC_{auth}$ through blockchain transaction as the argument of the function \textsc{Commit} (Algorithm \ref{AU1}), which will store the value $h'$ into state variable $h\_collector$ if the transaction is valid. Apart from $h\_collector$, several other state variables are defined in the contract $SC_{auth}$, described as follows:

\begin{enumerate}

    \item $h\_collector$ (State Variable): A mapping collection from the blockchain address belonged to one attribute authority $\bm{AA}$ to its commitment $h'$
    \item $unverified\_sk$ (State Variable): A mapping collection from the blockchain address belonged to one attribute authority $\bm{AA}$ to its list of unverified elements $SK$
    \item $unverified\_elements$ (State Variable): A mapping collection from the blockchain address belonged to one attribute authority $\bm{AA}$ to its list of unverified group elements $e'$ in both $\mathbb{G}_1$ and $\mathbb{G}_2$
    
\end{enumerate}  

After $h'$ recorded by contract $SC_{auth}$, \textit{AA} needs to reveal each committed element by passing two lists of \textit{s-pair} \begin{align}
    L_{sk} &= \{\bm{rp}_{\bm{X}}, \bm{rp}_{\bm{\tau}}, \bm{rp}^2_{\sigma}\}\\
    L_{e'} &= \{(\bm{rp}_s, \bm{rp}_s^2) | s \in e'\}
\end{align} as arguments of the function \textsc{Reveal} (Algorithm \ref{AU2}), which computes the hash result of $L_{sk}$ and $L_{vc}$ using utility contract $SC_{util}$ (Algorithm \ref{util1}), and compares the result with the value stored in state variable $h\_collector$. Finally, the valid set of \textit{s-pair} will be stored into state variables $unverified\_elements$ and $unverified\_sk$ of the contract $SC_{auth}$ respectively. 

\subsubsection{Verify}
The system enters into the stage \emph{Verify} after $ddl_1'$ set by trusted authority $\textit{AA}_{\text{trust}}$. 

First of all, attribute authority \textit{AA} generates the proof $\pi_{s} := \textsc{NIZK}(\bm{rp}_s, h || h_s)$ for each $s$ in both $e'$ and $SK$, and broadcast these proofs $\{\pi\}$ as a list \begin{equation}
    L_{\pi}' = \{\pi_{z}, \pi_{\alpha_z}, \pi_{\alpha_z z}, \pi_{\bm{X}}, \pi_{\bm{\tau}}, \pi_{\bm{\sigma}} \}
\end{equation} through transaction before deadline $ddl_2'$. The function \textsc{Prove} of contract $SC_{auth}$ (Algorithm \ref{AU3}) takes input $L_{\pi}'$ as the argument. It checks the validity of these proofs by using utility functions \textsc{SameRatio} and \textsc{CheckPoK} of contract $SC_{utils}$ (Algorithm \ref{util2} and \ref{util3}). 

We assign each \textit{AA} an index $i \in [n-1]$, based on the order in which $SC_{auth}$ receives the complete set of valid published proofs $\{\pi\}$. The trusted authority, denoted as $\textit{AA}_{\text{trust}}$, is assigned the index $n$.

At the end of this stage, we have the verified elements $PK_i = (g_1^{\bm{X}_i^{\intercal}\bm{A}}, \hat{e}(g_1,g_2)^{\bm{\tau}_i^{\intercal} \bm{A}}, g_2^{\sigma})$ and $\{o_i := g_1^{z_i},  g_1^{\alpha_{z_i}}, g_1^{ \alpha_{z_i} z_i}\}$ for each $\textit{AA}_i$.

\subsubsection{Generate}
In the last stage \emph{Generate}, $\textit{AA}_i$ needs to generate a set of group elements $(O, \theta_O, O')$, selects a reasonable number of supported attributes $l_i$, and broadcasts them to contract $SC_{auth}$ before the deadline $ddl_3'$.

The elements $(O, \theta_O, O')$ provided by $\textit{AA}_i$ are constructed as follows:
\begin{align*}
    O_i &:= o_{i,j} = \{(o_j)^{z_i}\}_{i \neq j, j \in [n]}\\
    \theta_{O_i} &:= \{(g_1^{\alpha_{z_j}})^{\alpha_{z_i}}\}_{i \neq j, j \in [n]}\\
    O'_i &:= \{(g_1^{\alpha_{z_j}z_j })^{\alpha_{z_i}z_i}\}_{i \neq j, j \in [n]}\\
\end{align*}

Invoked by this transaction, function \textsc{Generate} of $SC_{auth}$ (Algorithm \ref{AU4}) firstly checks the validity of elements $(O_i, \theta_{O_i}, O'_i)$ and the value of $l_i$, and then records $\textit{AA}_i$'s blockchain address $addr$ with the claimed attribute size $l_i$. For example, if $\textit{AA}_i$ owns the set of attributes $\{entry, mid, senior, agent, manager\}$, the value of attribute size $l_i$ is $5$.

\begin{table}
\caption{A example of address-\textit{AA}-attribute vector Mapping Table}
\label{tab:map tab}
\resizebox{\columnwidth}{!}{%
\begin{tabular}{|l|ll|l|l|ll|l|}
\hline
Address &
  \multicolumn{2}{l|}{$addr_1$} &  $addr_2$ &  ...... &  \multicolumn{2}{l|}{$addr_{n-1}$} & $addr_n$\\ \hline
Attribute Authority &
  \multicolumn{2}{l|}{$\textit{AA}_1$} &  $\textit{AA}_2$ &  ...... &  \multicolumn{2}{l|}{$\textit{AA}_{n-1}$} & $\textit{AA}_{\text{trust}}$\\ \hline
Attribute Representation &  
  \multicolumn{1}{l|}{$v_1$} &
  $v_2$ &
  $v_3$ &
  ...... &
  \multicolumn{1}{l|}{$v_{l-1}$} &
  $v_l$ &
  $v_{l+1}$\\ \hline

\end{tabular}
}
\end{table}

After the contract $SC_{auth}$ receives all the pairs $\{o_i, o_{i,j}, l_i\}$ from $\textit{AA}_i \in \{\textit{AA}_1, \textit{AA}_2, ..., \textit{AA}_n\}$, every Service User (\textit{SU}) can get the global parameter for the vector commitment system
\begin{equation}
    PP_{VC} := \{ g_1, g_2, \{o_i\}_{i \in [n]}, \{o_{i,j}\}_{i, j \in [n], i\neq j} \}
\end{equation}
and generate a mapping table that maps each blockchain address of $\textit{AA}_i$ to its corresponding information as shown in Table \ref{tab:map tab}. We use $l$ to represent the size of complete supported attributes set $\mathcal{X}$ and $v$ to represent the owned attributes for each \textit{AA}.

\subsection{Data User Registration}

To get the global identifier $GID$ which will be used in the process of \textit{Key Generation}, the data user (\textit{DU}) needs to register the owned address $addr_{DU}$ by calling the function \textsc{user\_register} of the contract $SC_{REG}$ (Algorithm \ref{REG}).

\textit{DU} needs to send predefined amount of \textit{GWEI} to the contract $SC_{REG}$ as the registration fee payment. This amount defaults to 1000000 \textit{GWEI}, which is approximately equivalent to 1.63 USD as of September 2023. In return, \textit{DU} receives a value $GID$, which is the hash value of \textit{DU}'s address $addr_{DU}$, also known as the $msg.sender$ of this transaction call. 

Following that, \textit{DU} establishes a secure channel or employs some off-chain methods with $\textit{AA}_i$ that have the required attributes and can verify \textit{DU}'s identity. We assume that \textit{DU} is an agent for one insurance company, and the company itself runs the consensus node of $\textit{AA}_i$ in this system. Therefore, \textit{DU} may easily get verified by showing an ID badge to the person who manages the $\textit{AA}_i$. \textit{DU} then requests that $\textit{AA}_i$ issues a set of attributes $\mathcal{R}_{i,GID}$ in regards to \textit{DU}'s identity. The format of a set of attributes might look like this: $(entry, N/A, N/A, agent, N/A)$ out of the full set of attributes $(entry, mid, senior, agent, manager)$.

For those $\textit{AA}_j, j \neq i$ that do not contain the needed attributes or cannot verify \textit{DU}'s identity, \textit{DU} may just set $\mathcal{R}_{j,GID}$ to be $(N/A, N/A, N/A)$ with $l_j = 3$.

Finally, \textit{DU} receives $\mathcal{R}_{i,GID}$ from $\textit{AA}_i$, sets $\mathcal{R}_{j,GID}$ for $\textit{AA}_j, j\neq i$, and combines them as the set of attributes $\mathcal{R}_{GID}$ which will be used in the process of Key Generation.

\subsection{Key Generation} 
Without loss of generality, we suppose that there are a total set of attributes $\mathcal{X}$, indexed from 1 to $l$ and a total set of attributes authorities $\mathcal{U}$ including $\textit{AA}_{\text{trust}}$, indexed from 1 to $n$. Assume that the attribute authority $\textit{AA}_i$ has a subset of attributes $\mathcal{S}_i$, then we have $\mathcal{S}_i \cap \mathcal{S}_j = \emptyset$ for $i \neq j$ and $i, j \in |n|$ and $\mathcal{S}_1 \cup \mathcal{S}_2 ... \cup \mathcal{S}_n = \mathcal{X}$. 

To get the secret key $sk_{i, GID, \bm{C}}$ which is comprised of multiple key parts $\{K_{j, GID, \bm{C}}\}_{j \in \mathcal{S}_i}$ from $\textit{AA}_i$, Data User (\textit{DU}) must initially generate the attribute vector $\bm{v}_{GID}$. This is based on $\mathcal{R}_{GID}$ that was acquired during the \textbf{Data User Registration} process.

Given that the \textit{DU}'s set of attributes $\mathcal{R}_{i,GID} \in \mathcal{X}$ and the mapping Table \ref{tab:map tab} generated from the process \textbf{Authority Setup}, the attribute vector $\bm{v}_{GID}$ is set as follows:
\begin{enumerate}
    \item Set the first $l$ entries such that $v_k$ = 
        $\begin{cases}
        1 & i \in R_{GID}\\
        0 & i \notin R_{GID}\\
        \end{cases}$
    \item Set the $l+1$ entry to be 1. ($\textit{AA}_{\text{trust}}$ is responsible for this entry)
\end{enumerate}

Then, \textit{DU} randomly chooses $r_1, r_2, ... , r_n \xleftarrow{\$} \mathbb{Z}_p^*$ for each \textit{AA} and combines the arguments $r_i$ with the attribute vector $\bm{v}_{i,GID}$ (represented as a bit string) to produce a committed value $m_i := \textsc{COMMIT}(v_{i,1} || v_{i,2} \dots v_{i,j} || r_i)$, where $i \in [n], j \in [|\mathcal{S}_i|]$.

\begin{table}[]
\caption{A example of $m$ for Vector Commitment }
\label{tab:att vc}
\resizebox{\columnwidth}{!}{%
\begin{tabular}{|l|lll|ll|}
\hline
Attribute Authority & \multicolumn{3}{l|}{$\textit{AA}_1$}                                      & \multicolumn{2}{l|}{$\textit{AA}_2$}          \\ \hline
Attributes          & \multicolumn{1}{l|}{entry} & \multicolumn{1}{l|}{mid}   & senior & \multicolumn{1}{l|}{agent} & manager \\ \hline
Vector Element      & \multicolumn{1}{l|}{$v_1$} & \multicolumn{1}{l|}{$v_2$} & $v_3$  & \multicolumn{1}{l|}{$v_4$} & $v_5$   \\ \hline
Element Value       & \multicolumn{1}{l|}{1}     & \multicolumn{1}{l|}{0}     & 0      & \multicolumn{1}{l|}{1}     & 0       \\ \hline
nonce               & \multicolumn{3}{l|}{$r_1$}                                       & \multicolumn{2}{l|}{$r_2$}           \\ \hline
$m_i$ & \multicolumn{3}{l|}{\textsc{COMMIT}($v_1 || v_2 || v_3 || r_1$)} & \multicolumn{2}{l|}{\textsc{COMMIT}($v_4 || v_5 || r_2$)} \\ \hline
\end{tabular}%
}
\end{table}

In Table \ref{tab:att vc}, for instance, we have two attribute authorities $\textit{AA}_1$ and $\textit{AA}_2$ that possess attributes (entry, mid, senior) and (agent, manager) respectively. If there is a data user \textit{DU} with a set of attributes $\mathcal{R} =$(entry, agent), \textit{DU}'s attribute vector is $\bm{v} = (1,0,0,1,0)$. For $\textit{AA}_1$ and $\textit{AA}_2$, \textit{DU} samples two random values $r_1, r_2 \xleftarrow{\$} \mathbb{Z}_p^*$ and then computes auxiliary information $\bm{aux} = (m_1, m_2)$ using \textsc{COMMIT} (Algorithm \ref{alg:commit}). 

In general, based on the public parameter for vector commitment $PP_{VC} = \{g_1, g_2, \{o_i\}_{i \in [n]}, \{o_{i,j}\}_{i,j \in [n], i\neq j} \}$, generated in \textbf{Authority Setup}, \textit{DU} can calculate the commitment $\bm{C}$ on the attribute vector $\bm{v}$ from its $\mathcal{R}$:

\begin{gather*}
    \bm{C} := o_1^{m_1} o_2^{m_2} \dots o_n^{m_n}    
\end{gather*}

To request the secret key part $K_{j, GID, \bm{C}}, j \in \mathcal{S}_i$, \textit{DU} establishes another secure channel with $\textit{AA}_i$ and sends commitment $C$ along with an opening $op_i$ and nonce $r_i$. Such opening $op_i$ is calculated as follows:

\begin{equation*}
    op_i = \prod^{n}_{j = 1, j \neq i} o_{i,j}^{m_j} = (\prod^{n}_{j = 1, j \neq i} o_j^{m_j})^{z_i}
\end{equation*}

Based on the \textit{DU}'s $GID$, $\textit{AA}_i$ firstly retrieves the information of the set of attributes $\mathcal{R}_{i, GID}$ which have been issued in the previous process \textbf{Data User Registration} and then verifies commitment $\bm{C}$ using the opening $op_i$ and nonce $r_i$ received
\begin{equation*}
    \hat{e}(\bm{C} / o_i^{m_i'}, g_2^{z_i}) \stackrel{?}{=}\hat{e}(op_i, g_2)
\end{equation*}

where $m_i'$ is the value calculated by the $\mathcal{R}_{i,GID}$ issued to DU. If the above check passes, $\textit{AA}_i$ uses a pre-defined random oracle $\mathcal{H}: \mathbb{G}_2 \times \{0,1\}^\lambda \times \mathbb{Z}_p^{k+1} \rightarrow \mathbb{Z}_p^{k+1}$ to generate masking value $\bm{\mu_i} \in \mathbb{Z}_p^*$
\begin{equation*}
    \bm{\mu_i} = \sum_{j = 1}^{i - 1} \mathcal{H}(y_j^{\sigma_i}, GID, \bm{C}) - \sum_{j = i + 1}^{n} \mathcal{H}(y_j^{\sigma_i}, GID, \bm{C})
\end{equation*}

and hash functions $H_1(GID, \bm{C}), ... , H_{k+1}(GID, \bm{C})$ to generate $g_2^{\bm{h}}$ where $\bm{h} \in \mathbb{Z}_p^{k+1}$
\begin{equation*}
    \bm{h}: H(GID, \bm{C}) = (H_1(GID, \bm{C}), ... , H_{k+1}(GID, \bm{C}))^{\intercal}
\end{equation*}

Finally, $\textit{AA}_i$ computes the secret keys \begin{equation}
    sk_{i, GID, \bm{C}} := \{K_{j, GID, \bm{C}}\}_{j \in \mathcal{S}_i}
\end{equation}, which consists of key parts
\begin{equation}
     K_{j, GID, \bm{C}} := g_2^{\bm{\tau_i} - v_j\bm{X}_i\bm{h} + \bm{\mu_i}}
\end{equation}
for each possessed attributes by $\textit{AA}_i$ and send $sk_{i, GID, \bm{C}}$ back to the \textit{DU} through the secure channel.

To get the special secret key part $sk_{n, GID, \bm{C}}$ for the $l+1$ entry, \textit{DU} also needs to communicate with trusted authority $\textit{AA}_{\text{trust}}$ and provides the commitment $\bm{C}$ with the opening $op_{n}$ and nonce $r_{n}$ through the secure channel. As shown in the Table \ref{tab:map tab}, $\textit{AA}_{\text{trust}}$ sets the $m_{n}'$ to be \textsc{COMMIT}($(v_{l+1}|| r_{n})$) where $v_{l+1} = 1$ and then checks the following equation \begin{equation*}
    \hat{e}(\bm{C} / o_{n}^{m_{n}'}, g_2^{z_{n}}) \stackrel{?}{=}\hat{e}(op_{n}, g_2)
\end{equation*}

If it passes, $\textit{AA}_{\text{trust}}$ computes the secret key part similar as $sk_{i, GID, \bm{C}}$
\begin{equation}
    sk_{n, GID, \bm{C}} := K_{l+1} = g_2^{\bm{\tau_{n}} - v_{n}\bm{X}_{n}\bm{h} + \bm{\mu_{n}}}
\end{equation}
and sends it back to \textit{DU}. 

Upon receiving all the responses from each $\textit{AA}_i, i \in [n]$, \textit{DU} will finally gets a complete set of secret keys $\{sk_{i, GID, \bm{C}}\}_{i \in [n]}$.

\subsection{Encryption and Upload}

Given that Attribute-based Encryption (ABE) is significantly more expensive than symmetric key encryption\cite{wang2014performance}, the files that the data owner (\textit{DO}) wants to share are not directly encrypted with ABE. Instead, hybrid encryption of ABE and AES is used for efficiency.

First of all, \textit{DO} randomly samples an AES key $AK$ from the key space and encrypts the file $F$ to be the ciphertext $CT_F$. 

Then, \textit{DO} uploads the ciphertext $CT_F$ to IPFS and records the file location $loc$ returned by IPFS. The metadata message $M$ can then be constructed as: 
\begin{equation*}
    M := (K, loc)
\end{equation*}

Using the policy vector $\bm{x}$ discussed above, \textit{DO} samples a random vector $\bm{s} \in \mathbb{Z}_p^k$, generates a policy vector $\bm{x} := (x_1, x_2, ..., x_n) \in \mathbb{Z}_p^n$ acting as the ciphertext policy, and outputs the ciphertext $CT_M$ consisting of
\begin{gather}
    ct_0 = g_1^{\bm{As}} \\
    ct_i = g_1^{(x_i\bm{U}^{\intercal}+\bm{X}_i^{\intercal})\bm{As}} \\
    ct' =  M \cdot \hat{e}(g_1,g_2)^{\bm{\tau}^{\intercal}\bm{As}}
\end{gather}, where $\bm{\tau} = \sum_{i = 1}^{n} \bm{\tau}_i$.

Lastly, the ciphertext $CT_M$ is sent to the contract $SC_{log}$ with the optional keyword $kw$ that may ease the data retrieval process. The contract $SC_{log}$ will emit this new uploading information to the subscribers as follows: 

\begin{algorithm}
\caption{Data Sharing}\label{Log}
\begin{algorithmic}[1]

\Require $msg.sender == DO$

\Function{log}{$ct, kw$}

    \State $Log.add(ct,kw)$
    \State emit Log$(ct, kw)$
\EndFunction

\Statex
\Function{get}{$index$}

    \If{$index == -1$}
        \State \Return $Log$
    \EndIf
    \If{$index$ is valid}
        \State \Return $Log[index]$
    \EndIf
\EndFunction

\end{algorithmic}
\end{algorithm}

\subsection{Download and Decryption}
As Algorithm \ref{Log} shows, all the encrypted metadata $CT_M$ will be recorded sequentially. If the Data User (\textit{DU}) is interested in one of \textit{DO}'s files, \textit{DU} may subscribe to the event created by the contract $SC_{log}$ and call the Function \textsc{GET} (Algorithm \ref{Log}) to obtain the encrypted metadata $CT_M$.  

To decrypt the ciphertext $CT_M$, \textit{DU} computes
\begin{equation*}
    \hat{e}(ct_0, \prod_{j=1}^{l + 1} K_j) \cdot \hat{e}(\prod ct_i^{v_i}, \bm{h}) = \hat{e}(g_1, g_2)^{\bm{\tau^{\intercal} A s}}
\end{equation*}
    
and tries to recover the message 

\begin{equation}
    ct' / \hat{e}(g_1, g_2)^{\bm{\tau^{\intercal} A s}} = M'
\end{equation}

If \textit{DU}'s attribute vector $\bm{v}$ satisfies the policy vector $\bm{x}$ selected by \textit{DO}, \textit{DU} can retrieve the AES key $AK$ and file location $loc$ from the metadata $M$. Finally, \textit{DU} downloads the encrypted file $CT_F$ from the IPFS based on the location $loc$, and uses $AK$ to decrypt $CT_F$ to recover the original file $F$.

\textit{Correctness.} Since $CT_M = (ct_0, \{ct_i\}^{n}_{i=1}, ct')$, $\{sk_{i, GID, \bm{C}} = \{K_j\}_{j \in \mathcal{R}_i} \}$ and $n = l + 1$, we can compute

\begin{align*}
&\hat{e}(ct_0, \prod_{j=1}^{l + 1} K_j) \cdot \hat{e}(\prod_{i=1}^{n} ct_i^{v_i}, \bm{H}(GID, \bm{C}))\\
&= \hat{e}(g_1^{\bm{A s}}, g_2^{\sum^n_{i=1} \bm{\tau}_i - v_i \bm{X_i h + \mu_i}})\\
&    \cdot \hat{e}(g_1^{\sum^n_{i=1} v_i (x_i \bm{U}^{\intercal} + \bm{X}_i^{\intercal}) \bm{A s}}, g_2^{\bm{h}})\\
&= \hat{e}(g_1,g_2)^{\bm{\tau}^{\intercal}\bm{A s} - \sum^{n}_{i=1} v_i\bm{h}^{\intercal} \bm{X}_i^{\intercal} \bm{A s}} \\
&    \cdot \hat{e}(g_1, g_2)^{\bm{<x,v>h^{\intercal} U^{\intercal} A s} + \sum^{n}_{i=1}v_i\bm{h^{\intercal} X_i^{\intercal} A s}} \\
&= \hat{e}(g_1, g_2)^{\bm{\tau}^{\intercal} \bm{A s}} \cdot \hat{e}(g_1, g_2)^{\bm{<x,v>h^{\intercal} U^{\intercal} A s}}
\end{align*}

If $<\bm{x,v}> = 0$, we obtain $\hat{e}(g_1, g_2)^{\bm{\tau^{\intercal} A s}}$ and can recover the message.

\section{System Analysis}\label{sec:analysis}
\subsection{Performance Analysis}
In this section, we compare our proposed Blockchain-enabled data governance system with previously discussed related works \cite{wang2018blockchain, qin2021blockchain, nishide2008attribute, li2022attribute, gao2020trustaccess, cui2018efficient,zhang2018Security, yang2018improving, zhao2022Secure} in terms of performance and security. In Table \ref{tab:comparison_part2}, we position these aforementioned works that are closely related to our work, providing comprehensive comparisons across the following assessment criteria:

\begin{table}[H]
\centering
\caption{Summary of access control system using Attribute-based Encryption}
\label{tab:comparison_part2}
\resizebox{\columnwidth}{!}{%
\begin{tabular}{l|l|l|l|l}
\hline
\textbf{Approach} &
  \textbf{Group} &
  \textbf{Security} &
  \textbf{Encryption} &
  \textbf{Decryption} \\ \hline
\cite{wang2018blockchain}   & Prime     & S-Rom & $3$ exp    & $2$ pair    \\ \hline
\cite{qin2021blockchain}    & Composite & F-Rom & $(5I+1)$ exp+pair     & $2I$pair+$I$exp  \\ \hline
\cite{yang2018improving}    & Composite & F-ROM & $(7I+1)$exp & $2I$pair + $4I$exp  \\ \hline
\cite{nishide2008attribute} & Prime     & S-STM & $(3n+1)$pair*         & $(3n+1)$pair*  \\ \hline
\cite{gao2020trustaccess}   & Composite & F-STM & $(l+2)$exp & $(l+1)$pair  \\ \hline
\cite{cui2018efficient}     & Prime     & S-STM & $(8I+2)$exp & $(6I+1)$pair + $I$exp  \\ \hline
\cite{zhang2018Security}    & Composite & F-STM & $(6I + 4)$exp*        & $(I+2)$pair + $2I$exp* \\ \hline
\cite{li2022attribute}      & Prime     & S-STM & $(5I+2)$exp & $(3I+1)$pair + $I$exp  \\ \hline
\cite{zhao2022Secure}       & Prime     & S-STM & $5I$pair + 4exp + 2pm & $2I$pair + $3I$exp  \\ \hline
\rowcolor{lightgray}This work                  & Prime     & S-ROM & $(2n+1)$exp & 2 pair + $2n$ exp  \\ \hline
\end{tabular}%
}
Several works did not present the computational complexity information. In such instances, we have either extrapolated the complexity on our own or referenced results presented in the survey by Zhang et al. \cite{zhang2020attribute}. The latter are highlighted with $^*$.
\end{table}

\begin{enumerate}
    \item Group: There are two types of groups used in CP-ABE: prime-order and composite-order groups. It is noted that the design of prime-order CP-ABE is more efficient than composite-order CP-ABE but is more challenging to achieve full security. 
    
    \item Security Model: Standard model (STM) and random oracle model (ROM) are two typical types of security models considered in the CP-ABE scheme. And adversaries are categorized into selective adversaries and adaptive adversaries. If a CP-ABE scheme is secure against adaptive adversaries under the standard model, we denote this scheme as F-STM achieving full security. Likewise, S-ROM represents a CP-ABE scheme robust against selective adversaries under the random oracle model, and F-ROM if it is secure against adaptive adversaries under the random oracle model.
    
    \item Computation Cost: This assessment considers both the encryption and decryption costs in terms of their complexity, measured in terms of certain standard (cryptographic) operations. The following notations are used:
        \begin{enumerate}
            \item[--] $exp$: exponentiation operation
            \item[--] $pair$: bilinear pair operation
            \item[--] $pm$: point multiplication operation
            \item[--] $I$: the access policy complexity for Linear Secret Sharing Scheme (LSSS)
            \item[--] $l$: the size of attributes
            \item[--] $n$: the size of attribute authorities
        \end{enumerate}
    
\end{enumerate}

In terms of computational costs, bilinear pairing (pair) is the most expensive operation compared to exponentiation (exp) and point multiplication (pm), and the LSSS is a special matrix whose rows are labeled by attributes such that the cost of $I$ might be similar as $n$ or $l$. Consequently, our scheme is superior to most of these works with respect to encryption and decryption cost, as shown in Table \ref{tab:comparison_part2}, but it is constrained in terms of the conjunction access policy. More precisely, for ciphertext $CT_M$ consisting of 
\begin{gather*}
    ct_0 = g_1^{\bm{As}}\\
    ct_i = g_1^{(x_i\bm{U}^{\intercal}+\bm{X}_i^{\intercal})\bm{As}}\\
    ct' =  M \cdot \prod_{i=1}^{n} (\hat{e}(g_1,g_2)^{\bm{\tau}_i^{\intercal}\bm{A}})^s
\end{gather*}, $ct_0$ costs $1$exp, $ct_i$ costs $n$exp, and $ct'$ costs another $n$exp as the result $\hat{e}(g_1,g_2)^{\bm{\tau}_i^{\intercal}\bm{A}}$ is given by each public key $PK_i$. Similarly, for decrypting ciphertext $CT_M$, the formula $\hat{e}(ct_0, \prod_{j=1}^{l+1} K_i) \cdot \hat{e}(\prod ct_i^{v_i}, \bm{h})$ costs $2$pair + $2n$exp.


\subsection{Policy Hiding}

Policy-hiding means that the ciphertext policy is hidden from inspection. In our approach, we achieve a weaker concept known as weakly attribute-hiding, which ensures that the policy remains unknown to all parties except those who can decrypt the ciphertext. 
Our access control system is constructed on top of the decentralized inner-product predicate encryption scheme in \cite{michalevsky2018decentralized}. Combined with special policy encoding, the proposed construction in Michalevsky's work is proved to be weakly attribute-hiding against chosen plaintext-attack under Assumption \ref{asp:sp-klin} in the presence of corrupt authorities. 

\cite{michalevsky2018decentralized} also provides detailed proof that, in the absence of corrupted authorities, the advantage of a PPT adversary $\mathcal{A}$ in winning a sequence of games, beginning with the actual scheme and ending with a challenge ciphertext, is negligible in the security parameter $\lambda$ against a challenger $S$.

However, the privacy of attributes can not be maintained if adversary $\mathcal{A}$ colludes with a corrupt authority and asks it to provide it with an attribute key $K_j, j \in [l+1]$ for a special value $v_j \in \mathcal{S}_i$. \begin{equation*}
     K_j = g_2^{\bm{\tau_i} - v_i\bm{X}_i\bm{h} + \bm{\mu_i}}
\end{equation*} 

Let $\mathcal{R}$ represent a subset of attributes included in the ciphertext policy. One of the naive constructions of policy vector $\bm{x}$ might be as follows:

\begin{enumerate}
    \item Set the first $l$ entries such that $x_i$ = 
        $\begin{cases}
        1 & i \in \mathcal{R}\\
        0 & i \notin \mathcal{R}\\
        \end{cases}$
    \item Set the $l+1$ entry to $-|\mathcal{R}|$.
\end{enumerate}

If \textit{DO} sets the policy vector to be $\bm{x} = (1,1,0,0,0,-2)$ and \textit{DU} has the attribute vector to be $\bm{v} = (1,1,0,1,0,1)$, it's easily to get $\bm{x} \cdot \bm{v} = 0$, which is the precondition of the successful decryption. A trustworthy authority should only issue keys for values $v_i = 0$ or $v_i = 1$. A corrupted authority, on the other hand, can provide an adversary with a key issued for a specific value $v_i$, which may "satisfy" the policy vector despite lacking all necessary attributes. If the policy vector is still to be $\bm{x} = (1,1,0,0,0,-2)$ and the corrupted \textit{AA} issues the key for the attribute vector $\bm{v} = (2,0,0,0,0,0)$, we can also get $\bm{x} \cdot \bm{v} = 0$. As a result, rather than using $0 ~or~ 1$ to indicate the absence or presence of an attribute, the enhanced scheme in \cite{michalevsky2018decentralized} encodes the required attributes using randomly sampled $r_i$ over a large field, defeating an attempt to craft a key by arbitrarily selecting a value $v_i$ that would result in a zero inner product.

\subsection{Receiver Privacy} 
According to the Definition \ref{def:IPPE}, the receiver must provide its attribute vector $\bm{v}$ to each attribute authority \textit{AA} from which a key is requested. In consequence, \textit{AA} learns not only if the user possesses the attribute that \textit{AA} owns, but also all of the user's other attributes. This appears to violate the privacy of the user in a decentralized setting.

Therefore, Michalevsky and Joye propose an enhancement that provides additional privacy protection for the attribute vector $\bm{v}$ in their work \cite{michalevsky2018decentralized}, which uses it to hide the set of receiver attributes $\bm{v}$ from authorities. This technique is called \textit{position-binding} Vector Commitments, introduced by Catalano and Fiore \cite{catalano2013vector}. Our access control system is based on this work, but it has been slightly modified to work with asymmetric pairings $\hat{e}: \mathbb{G}_1 \times \mathbb{G}_2 \rightarrow \mathbb{G}_T$, which meets two security requirements under Assumption \ref{asp:s-cdh}:
\begin{enumerate}
    \item \textbf{Position Hiding:} Even after seeing some openings for certain positions, an adversary cannot tell whether a commitment was made to a sequence of messages, $(m_1, \cdots, m_n)$ or $(m_1', \cdots, m_n')$.
    \item \textbf{Position Binding:} An adversary should not be able to open a vector commitment to two different messages at the same position.
\end{enumerate}

Since hiding is not a critical property and can be easily achieved in the realization of vector commitments \cite{catalano2013vector}, only the proof of position binding is provided below:
\begin{proof}
Suppose an efficient adversary $\mathcal{A}$ can produce two valid openings $op$ to different messages $(m, m')$ at the same position $i$. In that case, we can build an algorithm $\mathcal{B}$ that uses $\mathcal{A}$ to break the Square Computational Diffie-Hellman assumption.

As we know from Definition \ref{def:VC}, for a sequence of messages $m_1, m_2, \dots, m_n$ and public parameter $pp = (g_1, g_2, \{o_i\}_{i \in [n]}, \{o_{i,j}\}_{i,j \in [n], i \neq j})$, the vector commitment is $\bm{C} = o_1^{m_1}o_2^{m_2}\cdot o_n^{m_n}$ and the opening for position $i$ is $op_i =  (\prod^n_{j = 1, j \neq i} o_j^{m_j})^{z_i}$. 

The efficient algorithm $\mathcal{B}$ takes as input a tuple $(g_1, g_1^r, g_2^r)$ and aims to compute $g_1^{r^2}$ to break Assumption \ref{asp:s-cdh}.

First, $\mathcal{B}$ selects a random position $i \xleftarrow{\$} [n]$ on which adversary $\mathcal{A}$ will break the position binding. Next, $\mathcal{B}$ chooses $z_j \xleftarrow{\$} \mathbb{Z}^*_p$ where $\forall j\in [n], j \neq i$ and then computes:
\begin{gather*}
    \forall j \in [n]\setminus{i}: o_j = g_1^{z_j}, o_{i,j} = (g_1^r)^{z_j}, o_i = g_1^r\\
    \forall k,j \in [n]\setminus{i}, k \neq j: o_{k,j} = g_1^{z_k z_j}
\end{gather*}

Second, $\mathcal{B}$ sets $pp = (g_1, g_2, \{o_i\}_{i \in [n]}, \{o_{i,j}\}_{i,j \in [n], i \neq j})$ and runs $\mathcal{A}(pp)$, that outputs a tuple $(\bm{C}, m, m', j, op_j, op_j')$ such that $m \neq m'$ and both $op_j$ and $op_j'$ are valid.

Finally, $\mathcal{B}$ computes
\begin{equation*}
    g_1^{r^2} = (op_i' / op_i)^{(m_i - m_i')^{-1}}
\end{equation*}

Since openings verify the two messages $(m_i, m_i')$ correctly at position $i$, then it holds:
\begin{equation*}
    \hat{e}(\bm{C}, g_2^r) = \hat{e}(o_i^{m_i}, g_2^r) \hat{e}(op_i, g_2) = \hat{e}(o_i^{m_i'},g_2^r)\hat{e}(op_i', g_2)
\end{equation*} which means that
\begin{equation*}
    \hat{e}(o_i, g_2^r)^{m_i - m_i'} = \hat{e}(op_i'/op_i, g_2)
\end{equation*}
Since $o_i = g_1^r, op_i'/op_i = (g_1^{r^2})^{m_i - m_i'}$,\\ we have:
\begin{gather*}
    \hat{e}(o_i, g_2^r)^{m_i - m_i'} = \hat{e}(g_1, g_2)^{r^2(m_i-m_i')}\\
    \hat{e}(op_i'/op_i, g_2) = \hat{e}((g_1^{r^2})^{m_i - m_i'}, g_2) = \hat{e}(g_1, g_2)^{r^2(m_i-m_i')}
\end{gather*}, which justifies the correctness of $\mathcal{B}$'s output. Therefore, if the Square Computational Diffie-Hellan assumption holds, the scheme satisfies the position-binding property. 
\end{proof}

\subsection{Other Security Requirements}

\subsubsection{Trustability} 

Most existing solutions require an intermediary entity to ensure reliable and secure data management, resulting in expensive costs to prevent a single point of failure and privacy leakage. To overcome these obstacles, our approach employs the blockchain's distribution, decentralization, transparency, and immutability characteristics. By publishing the encrypted metadata, which consists of the AES key $AK$ and file location $loc$, to the blockchain, we can maintain the integrity of access control management without requiring any intermediaries.

\subsubsection{Traceability}

Our system can track and validate access control data on the blockchain. Any activities, including setup, registration, key generation, encryption, and data uploading, are recorded as immutable transactions.


\subsection{Collusion Attack Analysis}
A fundamental requirement for an ABE scheme is to prevent collusion attacks. Let $\textit{DU}_1$ and $\textit{DU}_2$ be two users, possessing sets of secret keys $sk_1$ and $sk_2$, corresponding to the attribute vector $\bm{v} = (v_1, v_2, \cdots, v_l, v_{l+1})$. $sk_1$ consists of key-parts $\{K_{1,j}\}_{j \in [l+1]}$ that enable obtaining a secret related to attribute $v$ from every attribute authorities $\textit{AA}_i$ for $\textit{DU}_1$'s processed attributes element $v \in \mathcal{R}_1$. $sk_2$ consists of key-parts $\{K_{2,i}\}_{i \in [l+1]}$that enable obtaining a secret related to attribute element $v_j$ from every attribute authorities $\textit{AA}_i$ for $\textit{DU}_2$'s processed attributes $v \in \mathcal{R}_2$. The collusion prevention is against that $\textit{DU}_1$ and $\textit{DU}_2$ can mix their key-parts $(\{K_{1,j}\}, \{K_{2,j}\})$ in a way that constructs a secret key $sk'$ to a new vector $\bm{v}'$ such that $v \in \mathcal{R}_1 \cup \mathcal{R}_2$. Otherwise, users can collude to decrypt the ciphertext, which is not accessible for either of them. 

Therefore, we propose two mechanisms to restrict key combinations:
\begin{enumerate}
    \item \textbf{Global Identifier} associates each secret key $sk$ with an identity by incorporating it into the key parts $K$ issued by the attribute authorities.
    \item \textbf{Masking Term} maps a combination of the public keys of all other authorities $\{g_2^{\sigma}\}$, global identifier $GID$, and the vector commitment $\bm{C}$ to a random element 
\end{enumerate}

It is worth noting that our scheme necessitates one special authority $\textit{AA}_{n}$ to be trusted specifically to issue keys only for $v_{l+1} \neq 0$. As previously discussed, this design choice ensures our scheme's security, even if some corrupt authorities collude with adversaries to compute a key for any distinct value $v$ within the attribute vector $\bm{v}$.

\subsection{Rogue-key Attack Analysis}
First, we provide proof that the original scheme, as proposed by Michalevsky et al. \cite{michalevsky2018decentralized} and outlined in Definition \ref{def:IPPE}, is vulnerable to a Rogue-key Attack. Following this, we discuss how our blockchain-based data governance mechanism effectively mitigates this vulnerability.

\subsubsection{Attack}\label{rk-attack}
We demonstrate that a corrupt authority can learn the key parts $K_{l+1}$ corresponding to $v_{l+1} \in \bm{v}$ issued by trusted authority $\textit{AA}_{\text{trust}}$ and thus decrypt the ciphertext in Michalevsky and Joys's scheme \cite{michalevsky2018decentralized}, without having the required secret key $sk$ satisfying the attached access policy. This is a typical attack, called Rogue-key Attack, in which the adversary uses a public key, a function of honest users' keys \cite{ristenpart2007power}.

\begin{proof}
First, an adversarial authority $\textit{AA}_{ad}$ holds until all the other attribute authorities $\textit{AA}_i, i \neq ad, i \in n$ publish their public keys.

\begin{equation*}
    PK_i = (g_1^{\bm{X}_i^{\intercal} \bm{A}},  \hat{e}(g_1,g_2)^{\bm{\tau}_i^{\intercal} \bm{A}}, y := g_2^\sigma )
\end{equation*}
and then calculates the public key as follows:
\begin{gather*}
    g_1^{\bm{X}_{ad}^{\intercal} \bm{A}} := - \sum_{j = 0, j \neq ad}^n g_1^{\bm{X}_j^{\intercal} \bm{A}}\\
    \hat{e}(g_1,g_2)^{\bm{\tau}_{ad}^{\intercal} \bm{A}}, y_{ad} := g_2^{\sigma_i}
\end{gather*}

Second, an adversarial data user $\textit{DU}_{ad}$ creates an attribute vector $\bm{v}' = (0,0, \cdots, 0,1)$ and requrests key parts $K_i$ from all the attribute authorities $\textit{AA}_i$. 

As in the encryption phase, data owner \textit{DO} will collect all the public keys $PK_i, i \in [n + 1]$ from each $\textit{AA}_i$ and outputs the cipher text $CT$ consisting of 
\begin{gather*}
    ct_0 = g_1^{\bm{As}} \\
    ct_i = g_1^{(x_i\bm{U}^{\intercal}+\bm{X}_i^{\intercal})\bm{As}} \\
    ct' =  m \cdot \hat{e}(g_1,g_2)^{\bm{\tau}^{\intercal}\bm{As}}
\end{gather*}, where $\bm{\tau} = \sum_{i = 1}^{n} \bm{\tau}_i$

Third, given the ciphertest $(ct_0, \{ct_i\}, ct')$ and received secret keys $\{K_i\}$, adversary $\textit{DU}_{ad}$ decrypts it as following:

\begin{align*}
&\hat{e}(ct_0, \prod_{j=1}^{l+1} K_i) \cdot \hat{e}(\prod_{i=1}^{n} ct_i^{v_i}, \bm{H}(GID, \bm{C}))\\
&= \hat{e}(g_1^{\bm{A s}}, g_2^{\sum^n_{i=1} \bm{\tau}_i - v_i \bm{X}_i \bm{h} + \bm{\mu}_i})\\
&    \cdot \hat{e}(g_1^{\sum^n_{i=1} v_i (x_i \bm{U}^{\intercal} + \bm{X}_i^{\intercal}) \bm{A s}}, g_2^{\bm{h}})\\
&= \hat{e}(g_1,g_2)^{\bm{\tau}^{\intercal}\bm{A s} - \sum^{n}_{i=1} v_i\bm{h}^{\intercal} \bm{X}_i^{\intercal} \bm{A s}} \\
&    \cdot \hat{e}(g_1, g_2)^{\bm{<x,v>h^{\intercal} U^{\intercal} A s} + \sum^{n}_{i=1}v_i\bm{h^{\intercal} X_i^{\intercal} A s}} \\
&= \hat{e}(g_1, g_2)^{\bm{\tau}^{\intercal} \bm{A s}} \cdot \hat{e}(g_1, g_2)^{\bm{<x,v>h^{\intercal} U^{\intercal} A s}}
\end{align*}

Since $\textit{DU}_{ad}$'s attribute vector is $\bm{v}' = (0,0, \cdots, 0,1)$, we have 
\begin{equation} \label{target}
    \hat{e}(g_1, g_2)^{\bm{\tau}^{\intercal} \bm{A s}} \cdot \hat{e}(g_1, g_2^{\bm{h}})^{x_{l+1} \bm{U}^{\intercal} \bm{A s}}
\end{equation} 
In order to decrypt the ciphertext, we need to cancel out the last element in the above Equation \ref{target}. Therefore, adversary $\mathcal{A}$ could use the published $\{ct_i\}$ to calculate attacking component $\omega$.
\begin{align*}
    \omega :&= \prod g_1^{(x_i\bm{U}^{\intercal}+\bm{X}_i^{\intercal})\bm{As}}\\
    &= g_1^{\sum x_i \bm{U}^{\intercal}\bm{As}} \cdot g_1^{\sum (\bm{X}_i^{\intercal} \bm{A})\bm{s}}
\end{align*}
Based on $g_1^{\bm{X}_{ad}^{\intercal} \bm{A}} := - \sum_{j = 0, j \neq ad}^n g_1^{\bm{X}_j^{\intercal} \bm{A}}$, we obtain $\omega = g_1^{\sum_{i=1}^n x_i \bm{X}^{\intercal}\bm{As}}$ and can further cancel the last component in the Equation \ref{target} as $x_{l+1} = - \sum_{i=1}^n x_i$.

In the end, the adversary recovers the message by computing 
\begin{equation*}
    ct' / \hat{e}(g_1, g_2)^{\bm{\tau}^{\intercal} \bm{A s}} = m.
\end{equation*}
\end{proof}

\subsubsection{Proof of security of our approach} \label{rk-pos} 
One way to mitigate rogue-key attack requires proof of knowledge upon public key registration with a certificate authority (CA) \cite{ristenpart2007power}. In the absence of a CA, Non-interactive Zero-knowledge Proof (NIZK) could be utilized to solve this security issue in a decentralized manner, as has been the case for our approach. 

\begin{proof}
In the process of \textbf{Authority Setup}, every \textit{AA} needs to generate a commitment of secrets used for public key $PK$, and then creates its proof of knowledge using \textbf{NIZK} (Algorithm \ref{alg:NIZK}). 

Since NIZK is built upon Schnorr Identification Protocol \cite{schnorr1990efficient}, our scheme is resistant to rogue-key attacks under the Assumption \ref{asp:KCA}, i.e., Knowledge of Coefficient Assumption \cite{bowe2017scalable}. Particularly, given a simple example $(\bm{rp}_s = (A, B = s \cdot A), h)$, adversary $\mathcal{A}$ can not generate a valid NIZK of $B$ without knowing $s$ with non-negligible probability. 

As defined in the NIZK of $s$ (Algorithm \ref{alg:NIZK}), a valid proof is $\pi = (R, u)$ such that
\begin{align*}
    \alpha &\xleftarrow[]{\$} \mathbb{Z}_p^*\\
    R &\gets \alpha \cdot A\\
    u &\gets \alpha + c \cdot s
\end{align*}, where $c$ is deterministic by $R$ and a given string $h$. Thus, if such $(R,u)$ is provided to $\mathcal{A}$ and $s$ used to generate $u$ is unknown, $\mathcal{A}$ can construct a pair $(x, y = u \cdot A - R)$ for a given $(\bm{rp}_s = (A, B = s \cdot A), h)$ and also there exists an efficient deterministic $\mathcal{X}$ that outputs $c$. 

However, because of the Knowledge of Coefficient Assumption \cite{bowe2018multi}, the probability that both 

\begin{enumerate}
    \item $\mathcal{A}$ `succeeds', meaning it satisfies the condition: SameRatio($(A, x),(B,y)$)
    \item $\mathcal{X}$ `fails', meaning $x \neq g_1^{c}$
\end{enumerate}

is negligible. Thus, adversary $\mathcal{A}$ can not generate a valid proof without knowing $s$ based on given information. In other words, $\mathcal{A}$ can not generate a public key $PK$ with a valid NIZK, if and only if $\mathcal{A}$ has the knowledge of $PK$'s exponent $s$. As such, our scheme is secure against the rogue-key attack that malicious \textit{AA} can not register a $PK$ that is a function of others.

\end{proof}

\subsection{Inferring the Secret Vector in Ciphertext}
With its further study of the Decentralized Policy-hiding ABE scheme \cite{michalevsky2018decentralized}, we also identified a potential risk associated with the generation of public parameters during \textbf{Setup}. The detail of the risk and proof of security is described in the following sections. 

\subsubsection{Vulnerability} \label{isv-risk}
As defined in Definition \ref{def:IPPE}, a trusted third-party (TTP) or an attribute authority (\textit{AA}) needs to pick a set of random numbers $a_1, a_2, \dots , a_k \xleftarrow{\$} \mathbb{Z}^*_p$, and then generate a random matrix
$\bm{A} = \begin{pmatrix}
a_1 & 0 & & 0 \\
0   & a_2 & & 0\\
\vdots & \ddots &  & \\
0 & 0 & \cdots & a_k\\
1 & 1 & \cdots & 1 \\
\end{pmatrix} \in \mathbb{Z}^{(k+1)\times k}_p$ with \\
$\bm{a}^{\perp} = \begin{pmatrix}
a_1^{-1}\\
a_2^{-1}\\
\vdots\\
a_k^{-1}\\
-1\\
\end{pmatrix} \in \mathbb{Z}^{(k+1)}_p$, which $\bm{A^{\intercal}a}^{\perp} = 0$ during \textbf{Setup}.\\ 

\begin{proof}
If the above generation is conducted by a PPT adversary $\mathcal{A}$, or if the sensitive information $\bm{a}^{\perp}$ is exposed, $\mathcal{A}$ can infer the value of $g_1^{\bm{s}}$ from any published ciphertext $CT$.

As we know, to generate a ciphertext $CT$, data owner (\textit{DO}) firstly samples a random vector $\bm{s}  = (s_1, s_2, \dots, s_k) \in \mathbb{Z}_p^k$, then creates a policy vector $\bm{x} = (x_1, x_2, ..., x_{n-1}, x_{n}) \in \mathbb{Z}_p^{n}$ acting as the ciphertext policy, and finally outputs the ciphertext $CT$ consisting of 
\begin{gather*}
    ct_0 = g_1^{\bm{As}} \\
    ct_i = g_1^{(x_i\bm{U}^{\intercal}+\bm{X}_i^{\intercal})\bm{As}} \\
    ct' =  M \cdot \hat{e}(g_1,g_2)^{\bm{\tau}^{\intercal}\bm{As}}
\end{gather*}, where $\bm{\tau} = \sum_{i = 1}^{n} \bm{\tau}_i$ and $\hat{e}(g_1,g_2)^{\bm{\tau}_i^{\intercal}}$ collected from published public keys $PK_i$.

Since 
\begin{equation*}
\bm{As} = \begin{pmatrix}
a_1s_1\\
a_2s_2\\
\vdots\\
a_ks_k\\
s_1 + s_2 + \dots + s_k\\
\end{pmatrix}
\end{equation*}

$\mathcal{A}$ can use the result $ct_0 = g_1^{\bm{As}}$ and known $\bm{a}$ to calculate:

\begin{equation*}
    (g_1^{\bm{As}})^{\bm{a}} = g_1^{\bm{Asa}}
\end{equation*}

For the exponent $\bm{Asa}$, we have 
\begin{align*}
\bm{Asa} &= \begin{pmatrix}
a_1s_1\\
a_2s_2\\
\vdots\\
a_ks_k\\
s_1 + s_2 + \dots + s_k\\
\end{pmatrix} \cdot (a_1^{-1}, a_2^{-1}, \dots, a_k^{-1}, -1)\\
&= \begin{pmatrix}
s_1   & a_1a_2^{-1}s_1 &\dots & a_1a_k^{-1}s_1 & -a_1s_1\\
a_1^{-1}a_2s_2   & s_2 &\dots & a_2a_k^{-1}s_2 & -a_2s_2\\
\vdots & \ddots &  & \\
a_1^{-1}a_ks_k   & a_2^{-1}a_ks_k &\dots & s_k & -a_ks_k\\
a_1^{-1}\sum{s} & a_2^{-1}\sum{s} & \cdots & a_k^{-1}\sum{s} & -\sum{s} \\
\end{pmatrix}   
\end{align*}

It is simple to determine that the $i$th elements from exponents $\bm{As}$ and $\bm{a}$ partially cancel each other out, and the remaining element $s_i$ is the targeting exponent. Therefore, adversary $\mathcal{A}$ might extrapolate $g_1^{\bm{s}}$ from the published ciphertext $ct_0 = g_1^{\bm{Asa}}$ with the knowledge of $\bm{a}^{\perp}$.

\end{proof}

\subsubsection{Proof of security with our approach} \label{isv-pos}
One potential mitigation of the inferring attack is to generate a composite public parameter $PP$ by multiple \textit{AA}, such that neither of those individual \textit{AA}s knows the secrets $\bm{a}$. This is achieved relying on the Knowledge of Exponent Assumption (KEA), introduced by Damg{\aa}rd in \cite{damgaard1991towards}, more precisely:

\begin{proof}
From the Section \ref{C&G} \textbf{Round 1}, the first participant $P_1$ samples a random matrix $\bm{A}_1$ and a scalar $\alpha_{\bm{A}_1}$, constracts the elements $(V_1 = g_1^{\bm{A}_1}, \theta_{V_1} = g_1^{\alpha_{\bm{A}_1}}, V_1' = g_1^{\alpha_{\bm{A}_1}\bm{A}_1})$, and publishes these information for the next participant. 

Upon receiving the $(V_1, \theta_{V_1}, V_1')$, $P_2$ samples his own random matrix $\bm{A}_2$ and a scalar $\alpha_{\bm{A},2}$, calculates 

\begin{align*}
    V_2 &:= \textsc{powerMulti}(V_1, \bm{A}_2)\\
    \theta_{V_2} &:= (\theta_{V_1})^{\alpha_{\bm{A}_2}}\\
    V_2' &:= \textsc{powerMulti}(V_1', \alpha_{\bm{A}_2}\bm{A}_2)\\
\end{align*} using Algorithm \ref{alg:pMu}, and also broadcasts them to the next participant. 

Similarly, the next participant performs the same operation until the very last one. In the end, we have a composite
\begin{multline*}
     g_1^{\bm{A}} = \textsc{powerMulti}(\textsc{powerMulti}(\dots(\\
    \textsc{powerMulti}(\textsc{powerMulti}(g_1^{\bm{A}_1}, \bm{A}_2), \bm{A}_3)\dots,\bm{A}_n)
\end{multline*}, and no participant learns the secrets of others unless one must collude with every other participant under Assumption \ref{asp:KCA}. 

Another security concern is the consistency that an adversary with index $i$ can sample different $(\bm{A}_i, \bm{A}_i')$ and $(\alpha_{\bm{A}_i}, \alpha_{\bm{A}_i}')$ to compute $V_i, \theta_{V_i}$ and $V_i'$, rendering final composite invalid and unusable. 

Since a set of group elements 
\begin{equation*}
    e =(\bm{A}, \bm{U}, \alpha_{\bm{A}}, \alpha_{\bm{U}}, \alpha_{\bm{A}} \cdot \bm{A}, \alpha_{\bm{U}} \cdot \bm{U}) \cdot g
\end{equation*} in both $\mathbb{G}_1$ and $\mathbb{G}_2$ has been committed and revealed by each \textit{AA} as described in Section \ref{C&R}, we may validate that adversary's $(V_i, \theta_{V_i}, V_i')$ is a proper multiple of $P_{i-1}$'s components:

\begin{align*}
    \hat{e}(V_i, g_2) &= \hat{e}(V_{i-1}, g_2^{\bm{A}_i})\\
    \hat{e}(\theta_{V_i}, g_2) &= \hat{e}(\theta_{V_{i-1}}, g_2^{\alpha_i})\\
    \hat{e}(V_i', g_2) &= \hat{e}(V_{i-1}', g_2^{\alpha_{\bm{A},i}\bm{A}_i})
\end{align*}

Even if all the other participants have collaborated, this is still a robust and sufficient setup scheme, even if only one participant is honest and does not reveal its secrets. Hence, the greater the number of unrelated participants in a trusted setup, the less likely the possibility of invalid and unusable public parameter $PP_{ABE}$ \cite{wilcox2016design}.

\end{proof}

\section{Concluding Remarks}\label{sec:conclusion}

Securing cloud data access and protecting identity privacy are legitimate concerns for many use cases, which this work addresses with a blockchain-based data governance system that is secure and privacy-preserving. A combination of attribute-based encryption (ABE) and the Advanced Encryption Standard (AES) makes the system efficient and responsive to real-world conditions. Our ABE encryption system can handle multi-authority scenarios while protecting identity privacy and hiding ABE's policy. 

However, because our system is built on top of Michalevsky's scheme\cite{michalevsky2018decentralized} and Bowe's setup \cite{bowe2018multi}, it has inevitably inherited a few drawbacks: First, it only supports fixed-size attributes and authorities, which means that any changes to these may necessitate requesting a new system setup or a new key for \textit{DU} from all authorities. Second, because each \textit{AA}'s public key must be shared with others for the computation of masking terms $\mu_i$, the system requires coordination among authorities during the setup phase. Third, an authorized user should be unable to learn which attributes were included in the encryption policy, which is a desirable property called \textit{strongly attribute-hiding}. However, our system can only achieve \textit{weakly attribute-hiding}, such that the policy is hidden from all entities other than authorized users. Finally, implementing trusted setup to protect against rogue-key attacks further complicates the entire setup process posing limitations to scenarios in which authorities might join and leave frequently. Addressing these shortcomings comprise our planned future work.

\bibliographystyle{unsrt}
\bibliography{APPENDIX/08-ref} 



\section{Appendix} \label{sec:app}
\subsection{System Contracts} \label{sc-sys}
The function \textsc{Commit} and the state variable $h\_collector$ are defined as following:

\begin{enumerate}

    \item $h\_collector$ (State Variable): A mapping collection from the blockchain address belonged to one attribute authority to its commitment $h$
    
\end{enumerate} 

\begin{algorithm}
\caption{Contract $SC_{sys}$: Part 1}\label{SS1}
\begin{algorithmic}[1]

\Function{Commit}{$h$}
    
    \If {$msg.sender \notin AAlist$ OR $block.timestamp > ddl_1$}    \Comment{Requirement check}
        \State throw 
    \EndIf

    \If  {$msg.sender \notin h\_collector$} 
        \Comment{Resubmitting check}

        \State $h\_collector[msg.sender] \gets h$

    \EndIf

\EndFunction

\algstore{part1}
\end{algorithmic}
\end{algorithm}

The function \textsc{Reveal} and some state variables used are defined as follows:

\begin{enumerate}

    \item $h\_collector$ (State Variable): A mapping collection from the blockchain address belonged to one attribute authority to its commitment $h$
    \item $unverified\_elements$ (State Variable): A mapping collection from the blockchain address belonged to one attribute authority to its unverified list of \textit{s-pair}s $L = \{(\bm{rp}_s, \bm{rp}_s^2) | s \in e\}$ in both $\mathbb{G}_1$ and $\mathbb{G}_2$
    
\end{enumerate}  

\begin{algorithm}
\caption{Contract $SC_{sys}$: Part 2}\label{SS2}
\begin{algorithmic}[1]
\algrestore{part1}
\Function{Reveal}{$L_{rp}$}
    
    \If {$msg.sender \notin h\_collector$ OR $block.timestamp > ddl_1$} \Comment{Requirement check}
        \State throw 
    \EndIf

    \ForAll{$(a, b) \in L_{rp}$}
        \State $h \gets h ||$ \Call{Hash}{a} $||$ \Call{Hash}{b}
    \EndFor
    
    \If {$h\_collector(msg.sender) \neq h$}\Comment{Check validity}
        \State throw 
    \EndIf

    \If  {$msg.sender \notin unverified\_elements$} 
        \Comment{Resubmitting check}
        \State $unverified\_elements[msg.sender] \gets L_{rp}$

    \EndIf

\EndFunction

\algstore{part2}
\end{algorithmic}
\end{algorithm}

The function \textsc{Prove} and some state variables used are defined as follows:
\begin{enumerate}
    \item $verified\_elements$ (State Variable): A mapping collection from the blockchain address belonged to one attribute authority to its verified list of \textit{s-pair}s $L = \{(\bm{rp}_s, \bm{rp}_s^2) | s \in e\}$ in both $\mathbb{G}_1$ and $\mathbb{G}_2$

\end{enumerate}

\begin{algorithm}
\caption{Contract $SC_{sys}$: Part 3}\label{SS3}
\begin{algorithmic}[1]
\algrestore{part2}
\algblock{Assembly}{End}
\Function{Prove}{$L_{\pi}$}

    \If{$unverified\_elements[msg.sender] == []$ OR $block.timestamp < ddl_1$ OR $block.timestamp > ddl_2$}
        \State throw 
    \EndIf

    \State $L_{rp} \gets unverified\_elements[msg.sender]$
    \For{$i \leftarrow 0, 5$}
        \State $(rp, rp2) \gets L_{rp}[i]$
        \State $pi \gets L_{\pi}[i]$
        
        \If{not \Call{SameRatio}{$rp,rp2$}}
            \State throw
        \EndIf
        \State $h_{tmp} \gets  h\_collector[msg.sender] || \Call{Hash}{rp_s} $
        \If{not \Call{CheckPoK}{$rp_s, \pi_s, h_{tmp}$}}
            \State throw
        \EndIf
    \EndFor

    \State $verified\_elements[msg.sender] \gets L_{rp}$ 
    \State $unverified\_elements[msg.sender] = []$
    
    
\EndFunction
\algstore{part3}
\end{algorithmic}
\end{algorithm}

The function \textsc{Compute} and some state variables used are defined as follows:

\begin{enumerate}

    \item $V\_curr$ (State Variable): The most recent published value $V$
    \item $V\_curr'$ (State Variable): The most recent published value $V'$
    \item $\theta\_curr$ (State Variable): The most recent published scalar $\theta$
    
\end{enumerate}  

\begin{algorithm}
\caption{Contract $SC_{sys}$: Part 4}\label{SS4}
\begin{algorithmic}[1]
\algrestore{part3}
\Function{Compute}{$V, \theta, V'$}

    \If{$unverified\_elements[msg.sender] \neq []$ OR $block.timestamp < ddl_2$ OR $block.timestamp > ddl_3$}
        \State throw 
    \EndIf    
    
    \If{not $V\_curr$} \Comment{Initialize $V, \theta, V'$}
        \State $V\_curr \gets V$
        \State $\theta\_curr \gets \theta$
        \State $V\_curr' \gets V'$
        \State \Return
    \EndIf    
    
    \State $(rp_{\bm{A}}, rp_{\alpha_{\bm{A}}}, rp_{\alpha_{\bm{A}}\bm{A}}) \gets verified\_elements[msg.sender]$
    \State $acc_1 \gets $\Call{SameRatio}{$(V\_curr, V),rp_{\bm{A}}$}
    \State $acc_2 \gets $\Call{SameRatio}{$(\theta\_curr, \theta), rp_{\alpha_{\bm{A}}}$}
    \State $acc_3 \gets $\Call{SameRatio}{$(V\_curr', V'),rp_{\bm{A}\alpha_{\bm{A}}}$}
    \If {$acc_1 == acc_2 == acc_3 ==$ true }\Comment{Check validity}
        \State $V\_curr \gets V$
        \State $\theta\_curr \gets \theta$
        \State $V\_curr' \gets V'$
    \Else
        \State throw
    \EndIf

\EndFunction
\algstore{part4}
\end{algorithmic}
\end{algorithm}

The function \textsc{Generate} and some state variables used are defined as follows:

\begin{enumerate}
    \item $W\_curr$ (State Variable): The most recent published value $W$
    \item $W\_curr'$ (State Variable): The most recent published value $W'$
    \item $\theta\_curr\_$ (State Variable): The most recent published scalar $\theta$

\end{enumerate}  

\begin{algorithm}
\caption{Contract $SC_{sys}$: Part 5}\label{SS5}
\begin{algorithmic}[1]
\algrestore{part4}
\Function{Generate}{$W, \theta, W'$}

    \If{$unverified\_elements[msg.sender] \neq []$ OR $block.timestamp < ddl_2$ OR $block.timestamp > ddl_3$}
        \State throw 
    \EndIf    
    
    \If{not $W\_curr$} \Comment{Initialize $W, \theta, W'$}
        \State $W\_curr \gets W$
        \State $\theta\_curr\_ \gets \theta$
        \State $W\_curr' \gets W'$
        \State \Return
    \EndIf    
    
    \State $(rp_{\bm{U}}, rp_{\alpha_{\bm{U}}}, rp_{\alpha_{\bm{U}}\bm{U}}) \gets verified\_elements[msg.sender]$
    \State $acc_1 \gets $\Call{SameRatio}{$(W\_curr, W),rp_{\bm{U}}$}
    \State $acc_2 \gets $\Call{SameRatio}{$(\theta\_curr\_, \theta), rp_{\alpha_{\bm{U}}}$}
    \State $acc_3 \gets $\Call{SameRatio}{$(W\_curr', W'),rp_{\alpha_{\bm{U}}\bm{U}}$}
    \If {$acc_1 == acc_2 == acc_3 ==$ true}\Comment{Check validity}
        \State $W\_curr \gets W$
        \State $\theta\_curr\_ \gets \theta$
        \State $W\_curr' \gets W'$
    \Else
        \State throw
    \EndIf

\EndFunction

\end{algorithmic}
\end{algorithm}

\subsection{Authority Setup Contracts} \label{sc-au}
Function \textsc{Commit} and the state variable $h\_collector$ are defined as following:

\begin{enumerate}
    \item $h\_collector$ (State Variable): A mapping collection from the blockchain address belonged to one attribute authority to its commitment $h$
\end{enumerate} 

\begin{algorithm}
\caption{Contract $SC_{auth}$: Part 1}\label{AU1}
\begin{algorithmic}[1]

\Function{Commit}{$h$}
    
    \If {$msg.sender \notin AAlist$ OR $block.timestamp > ddl_1$} 
        \Comment{Basic check}
        \State throw 
    \EndIf

    \If  {$msg.sender \notin h\_collector$} 
        \Comment{Duplicate Sub check}
        \State $h\_collector[msg.sender] \gets h$
    \EndIf

\EndFunction

\algstore{au1}
\end{algorithmic}
\end{algorithm}

The function \textsc{Reveal} and some state variables used are defined as follows:
\begin{enumerate}
    \item $h\_collector$ (State Variable): A mapping collection from the blockchain address belonged to one attribute authority $\bm{AA}$ to its commitment $h'$
    \item $unverified\_sk$ (State Variable): A mapping collection from the blockchain address belonged to one attribute authority $\bm{AA}$ to its list of unverified elements $SK$
    \item $unverified\_elements$ (State Variable): A mapping collection from the blockchain address belonged to one attribute authority $\bm{AA}$ to its list unverified group elements $e'$ in both $\mathbb{G}_1$ and $\mathbb{G}_2$
\end{enumerate}  

\begin{algorithm}
\caption{Contract $SC_{auth}$: Part 2}\label{AU2}
\begin{algorithmic}[1]
\algrestore{au1}
\Function{Reveal}{$L_{sk}, L_{vc}$}
    \If {$msg.sender \notin h\_collector$ OR $block.timestamp > ddl_1'$ OR $msg.sender \in unverified\_elements$} 
        \State throw 
    \EndIf

    \ForAll{$a \in L_{sk}$}
        \State $h \gets h || \Call{Hash}{a}$
    \EndFor

    \ForAll{$(a,b) \in L_{vc}$}
        \State $h \gets g || \Call{Hash}{a} || \Call{Hash}{b}$
    \EndFor
    
    \If {$h\_collector[msg.sender] \neq h$}\Comment{Check validity}
        \State throw 
    \EndIf

    \State $unverified\_elements[msg.sender] \gets L_{vc}$
    \State $unverified\_sk[msg.sender] \gets L_{sk}$
\EndFunction

\algstore{au2}
\end{algorithmic}
\end{algorithm}

The function \textsc{Prove} and related state variables are defined as follows:
\begin{enumerate}
    \item $verified\_elements$ (State Variable): A mapping collection from the blockchain address belonged to one attribute authority to its list of verified group elements in $e'$ 
    \item $verified\_sk$ (State Variable): A mapping collection from the blockchain address belonged to one attribute authority $\bm{AA}$ to its list of verified elements in $PK_{ind}$
    \item $counter$ (State Variable): An integer counter initialized to $0$. It is used to track the number of valid attribute authorities.
    \item $index$ (State Variable): A mapping collection from the blockchain address belonged to one attribute authority to its index $i$
    

\end{enumerate}

\begin{algorithm}
\caption{Contract $SC_{auth}$: Part 3}\label{AU3}
\begin{algorithmic}[1]
\algrestore{au2}
\algblock{Assembly}{End}
\Function{Prove}{$L_{\pi}$}
    
    
    \If{$unverified\_elements[msg.sender] == []$ OR $block.timestamp < ddl_1'$ OR $block.timestamp > ddl_2'$}
        \State throw 
    \EndIf

    \State $L_{e'} \gets unverified_elements[msg.sender]$
    \State $L_{tmp} \gets []$
    \ForAll{$(a,b) \in L_{e'}$}
        \If{not \Call{SameRatio}{$a,b$}}
            \State throw
        \EndIf
        \State $L_{tmp}$.add$(a)$
    \EndFor

    \State $L_{sk} \gets unverified_sk[msg.sender]$
    \State $L_{tmp} \gets L_{tmp} || L_{sk}$
    \ForAll{$i \gets 0,5$}
        \State $rp \gets L_{tmp}[i]$
        \State $h_{tmp} \gets h\_collector[msg.sender] ||$ \Call{Hash}{$rp$}
        \If{not \Call{CheckPoK}{$rp, L_{\pi}[i], h_{tmp}$}}
            \State throw
        \EndIf
    \EndFor

    \State $verified\_elements[msg.sender] \gets L_{e'}$
    \State $unverified\_elements[msg.sender] \gets []$

    \State $verified\_sk[msg.sender] \gets L_{sk}$
    \State $verified\_sk[msg.sender] \gets []$

    \State $counter \gets counter + 1$
    \State $index[msg.sender] \gets counter$
    

\EndFunction
\algstore{au3}
\end{algorithmic}
\end{algorithm}

The function \textsc{Setup} and some state variables used are defined as follows:

\begin{enumerate}

    \item $verified\_O$ (State Variable): A mapping collection from the blockchain address belonged to one attribute authority to its set of elements $(\{o_j^{z_i}\}_{i \neq j, i,j \in [n]}$ in the public parameter of vector commitment $PP_{VC}$

    \item $attribute\_size$ (State Variable): A mapping collection from the blockchain address belonged to one attribute authority to its number of supported attribute

\end{enumerate}

\begin{algorithm}
\caption{Contract $SC_{auth}$}\label{AU4}
\begin{algorithmic}[1]
\algrestore{au3}
\Function{Generate}{$O, \theta, O', l$}
    \If {$msg.sender \notin AAlist$ OR $block.timestamp > ddl$} \Comment{Requirement check}
        \State throw 
    \EndIf
    
    \State $rp_z, rp_{\alpha_z}, rp_{\alpha_z z} \gets verified\_elements[msg.sender]$
    \State $i \gets index[msg.sender]$
    \For {$j \leftarrow 0, n-1$}
        \If{$i == j$}
            \State continue
        \EndIf
        
        \State $check1 \gets \Call{SameRatio}{(rp_z[1], O[j]), rp_z}$
        \State $check2 \gets \Call{SameRatio}{(rp_{\alpha}[1], \theta[j]), rp_{\alpha_z}}$
        \State $check3 \gets \Call{SameRatio}{(rp_{\alpha_z z}[1], O'[j]), rp_{\alpha_z z}}$
        
        \If {$check1 == check2 == check3 ==$ true}
            \State $verified\_O[msg.sender] \gets O$
        \Else
            \State throw
        \EndIf
    \EndFor
    
    \State $attribute\_size[msg.sender] \gets l$
\EndFunction

\end{algorithmic}
\end{algorithm}

\subsection{User Registration Contracts}

Function \textsc{user\_register} and the state variable $collector$ are defined as following:
\begin{enumerate}
    \item $collector$ (State Variable): A mapping collection from the $addr_{DU}$ of $DU$ to the value of global identifier $GID$
\end{enumerate} 

\begin{algorithm}
\caption{Contract $SC_{reg}$}\label{REG}
\begin{algorithmic}[1]

\Function{user\_register}{}
    \If{$msg.value \leq 1000000$} \Comment{Registration fee around 1 USD}
        \State throw
    \EndIf
    
    \State $GID \gets Hash(msg.sender)$
    \State $collector[msg.sender] \gets GID$ \Comment{Assign unique GID}
    
\EndFunction

\end{algorithmic}
\end{algorithm}

\subsection{Utility Function Contracts} \label{sc-util}
The three main functions \textsc{Hash}, \textsc{SameRatio} and \textsc{CheckPoK} are defined as following:
\begin{algorithm}
\caption{Contract $SC_{utils}$: Part 1}\label{util1}
\begin{algorithmic}[1]
\algblock{Assembly}{End}

\Function{Hash}{$m$}
    \State Divide $m$ in small pieces of message blocks $msgs$
    \State $h \gets IV$ \Comment{Pre-defined Initialization Vector}
    \ForAll {$msg \in msgs$}
    
        \State $args \gets $abi.encodePacked$(12, h, msg, \dots)$
    
        \Assembly
            \State $success \gets$ \textbf{staticcall}$(\mathtt{0x09}, args, h)$
            \If{$success == 0$}
                \State revert$(0,0)$
            \EndIf
        \End
    \EndFor
    \State \Return $h$

\EndFunction

\algstore{utilp1}
\end{algorithmic}
\end{algorithm}

\begin{algorithm}
\caption{Contract $SC_{utils}$: Part 2}\label{util2}
\begin{algorithmic}[1]

\algrestore{utilp1}
\algblock{Assembly}{End}

\Function{SameRatio}{$rp1, rp2$}
    \State $input \gets (-rp1[0], rp2[1],rp1[1],rp2[0])$
    \Assembly
        \State $success \gets$ \textbf{staticcall}$(\mathtt{0x08}, input, output)$
        \If{$success == 0$}
            \State revert$(0,0)$
        \EndIf
    \End
    
    \State \Return true
    
\EndFunction
\algstore{utilp2}
\end{algorithmic}
\end{algorithm}

\begin{algorithm}
\caption{Contract $SC_{utils}$: Part 3}\label{util3}
\begin{algorithmic}[1]
\algrestore{utilp2}
\algblock{Assembly}{End}

\Function{CheckPoK}{$rp_s, \pi_s, h_s$}
    
    \State $c \gets $\textsc{Hash}$(\pi_s[0] || h_s)$
    \State $input_1 \gets (rp_s[0], \pi_s[1])$ \Comment{$u \cdot f$}
    \State $input_2 \gets (rp_s[1], c)$ \Comment{$c \cdot H$}
    \Assembly
        \State $success_1 \gets$ \textbf{staticcall}$(\mathtt{0x07}, input_1, result_1)$
        \State $success_2 \gets$ \textbf{staticcall}$(\mathtt{0x07}, input_2, result_2)$
        \If {$success_1 \times success_2 == 0$}
            \State revert$(0,0)$
        \EndIf
        
        \State $input_3 \gets (\pi_s[0], result_2)$ \Comment{$R + c \dot H$}
        \State $success_3 \gets$ \textbf{staticcall}$(\mathtt{0x06}, input_3, result_3)$
        \If {$success_3 == 0$}
            \State revert$(0,0)$
        \EndIf
    \End
    
    \State $output \gets (result_1 == result_3)$
    \State \Return $output$
    
\EndFunction
\end{algorithmic}
\end{algorithm}

\end{document}